\newtheorem{thm}{Theorem}[section]
\newtheorem{lem}[thm]{Lemma}
\newtheorem{prop}[thm]{Proposition}
\newtheorem{corr}[thm]{Corollary}
\newtheorem{defn}[thm]{Definition}
\newtheorem{remark}[thm]{Remark}
\newenvironment{proof}{{\bf Proof.}  }{\hfill$\blacksquare$}
\def\R{\mathbb{R}}  
\def\C{\mathbb{C}}
\newcommand{\at}[2]{\left.#1\right|_{#2}} 
\newcommand{\D}[2]{\frac{d#1}{d#2}} 
\newcommand{\DO}[1]{\frac{d}{d#1}} 
\newcommand{\PD}[2]{\frac{\partial#1}{\partial#2}} 
\newcommand{\PDDM}[3]{\frac{\partial^{2}{#1}}{\partial{#2}\partial{#3}}}
\newcommand{\norm}[1]{\Vert#1\Vert}
\newcommand{\abs}[1]{\left|#1\right|}
\newcommand{\paren}[1]{\left(#1\right)}
\newcommand{\brac}[1]{\left[#1\right]}
\newcommand{\dpair}[2]{\left<#1,#2\right>}
\newcommand{\ind}{\mathbf{1}}
\newcommand{\vx}{\boldsymbol{x}}
\newcommand{\vg}{\boldsymbol{\gamma}}
\newcommand{\vy}{\boldsymbol{y}}
\newcommand{\vl}{\boldsymbol{L}}
\newcommand{\vm}{\boldsymbol{M}}
\newcommand{\til}[1]{\widetilde{#1}}
\begin{document}

\title{Emergence of Balance from a model of Social Dynamics.}

\author{Ikemefuna Agbanusi \\ and\\Jared C. Bronski\\ University of Illinois Department of Mathematics\\ 1409 W Green St. \\ Urbana, IL 61801.} 

\date{\today.}

\maketitle

\begin{abstract}
We propose a model for social dynamics on a network. In this model each actor 
holds a position on some issue, actors and their opinions being associated 
to vertices of the graph, and,
additionally, the actors hold opinions of one another, with these opinions 
being associated to edges in the graph. These quantities are
allowed to evolve according to the gradient flow of a natural free energy.
We show that for a small spread in opinions the model converges to 
a consensus state, where all actors hold the same position. For a larger 
spread in opinion there is a phase transition marked by the birth of a second 
stable state: in addition to the consensus state there is a second polarized 
or partisan state. This state, when it exists, is conjectured to be global 
energy minimizer, with the consensus state being a local energy minimizer. 
We derive an energy inequality which supports, though does not prove, this. 
Interestingly, all of the steady 
states we find, with the exception of the consensus state, are either balanced 
(in the sense of Heider) or are completely unbalanced states where  all 
triangles are unbalanced. The latter solutions are, not surprisingly, always 
unstable.
\end{abstract}

\section{Introduction}
There has been great interest lately in developing mathematical models to 
understand emergent social phenomenon. Some of the many models considered 
include spin-like models for
opinion dynamics~\cite{Galam.1990,Mobilia.2003} and cultural dynamics~\cite{Axelrod.1997,Castellano.2000}.  
In this paper we introduce a model for the co-evolution of opinions and
positions in a social network in order to understand the dynamics of 
balance. The idea of balance dates to the work of 
Heider~\cite{Heider.1946}, who argued that in a stable network of 
relationships every triad should have an even number of negative (antagonistic)
edges.  In essence these networks are ones which satisfy the aphorism 
``the enemy of my enemy is my friend.''
For example balance theory suggests that a network with three mutually 
antagonistic groups is unstable, with two of the groups making common cause against the third.
Harary and Cartwright~\cite{Cartwright.Harary.1956} generalized 
this condition to require an even number of negative 
edges in every cycle, and showed that balanced graphs are exactly 
bi-partitite graphs where edges within a group have positive weights and 
edges between groups have a negative weight. 

The works of Heider and Cartwright and Harary are static, but the language 
used is strongly suggestive of a dynamical process, and there have been 
several attempts to introduce dynamical models of this process. 
One such model was introduced by Antal, Krapivsky and 
Redner~\cite{Antal.Krapivsky.Redner.2005}. 
In this discrete time model, unstable triangles transition, with some 
probability, to stable ones by flipping the sign of an edge. Another model 
is the one introduced by Kulakowski, Gawr\'onski and Gronek,~\cite{Kulakowski.Gawronski.Gronek.2005} and 
later analyzed by Marvel, Kleinberg, Kleinberg and Strogatz~\cite{Marvel.Kleinberg.Kleinberg.Strogatz.2010,Marvel.Strogatz.Kleinberg.2009}. 
The Kulakowski-Gawr\'onski-Gronek model takes the form of a single matrix 
Ricatti equation 
\[
\D{{\bf X}}{t} = {\bf X}^2,
\]
where $X_{ij}$ denotes the opinion actor $i$ holds of actor $j$. This model 
was explicitly solved in the symmetric case ($X_{ij}=X_{ji}$) by  Marvel, 
Kleinberg, Kleinberg and Strogatz, who showed that for generic initial 
conditions the matrix ${\bf X}(t)$ converges in finite time to a balanced state. 

While these models are very interesting, they are phenomenonological: 
the form of the dynamics is chosen to drive the edge weights
towards the balanced state. It would be preferable to find a model 
in which the balance state was not assumed but rather 
emergent from the dynamics. Further in the modeling it is not necessarily 
desirable to assume that the underlying graph is the complete graph, where all
actors know each other, but it is not clear how to extend the  Kulakowski-Gawr\'onski-Gronek model to a more general graph which might have few or no triangles.

In this paper we propose and
analyze such a model: each actor behaves in a very natural way, and balance 
arises naturally from the asymptotic steady states of the model. 

The model we consider is posed on a graph $\Gamma$, representing  a network 
of relations. The graph has $N$ vertices, representing a number of actors, and $\abs{E}$ edges, 
representing the relations between pairs of actors. For this paper we assume 
that the underlying graph is the complete graph, where all actors know each 
other, so $\abs{E}={N\choose{2}}$ but the model extends in 
a straightforward way to an arbitrary graph. There are 
two types of variables in this model: 
\begin{itemize}
\item Positions $x_i(t)$ are associated with the vertices, and represent the 
position of actor $i$ on some issue which can be represented as a continuum:
conservative vs. liberal, tastes great vs. less filling, etc. 
\item Opinions $\gamma_{ij}$ are associated with the edges in the 
graph, and  represent the degree of friendliness or respect between actor 
$i$ and actor $j$, with $\gamma_{ij}>0$ representing friendliness and 
$\gamma_{ij}<0$ antagonism. 
\end{itemize}

We will not initially assume that the opinions $\gamma_{ij}$ are symmetric: 
but we will show that this emerges naturally 
from the dynamics: the steady states of the model all have the property 
that the opinions are symmetric: $\gamma_{ij}=\gamma_{ji}$.

We associate to the quantities $x_i,\gamma_{ij}$ a  Dirichlet energy 
\[
{\mathcal D}(\vx,\vg)=\sum_{i>j} \gamma_{ij} (x_i-x_j)^2,
\]
which represents the total amount of disharmony in the system. Note that 
$\gamma_{ij}$ above can be of either sign. If all $\gamma_{ij}>0$  
(friendly relations) then the energy is minimized when the actors take the same 
position, $x_i=x_j$: friends like to agree. If, on the other 
hand, $\gamma_{ij}<0$, then the energy is minimized when $(x_i-x_j)^2$ is large: 
antagonists prefer to disagree. 

The basic dynamics of the model is as follows: we assume that all actors act 
continuously in time so as to minimize ${\mathcal D}(\vec x,\vec \gamma)$ subject 
to the following constraints.
\begin{align}
\frac{1}{2E} \sum_{i\neq j} \gamma_{ij} &= Q>0\label{eqn:gam_sum_cons}\\
\frac{1}{2E} \sum_{i\neq j} \gamma_{ij}^2 &=P\label{eqn:gam_norm_cons} \\
\sum_{i=1}^N x_i^2 &=R \label{eqn:x_norm_cons}
\end{align}
The first constraint requires that the sum of the opinions must be a positive 
constant. This can be interpreted as a societal pressure towards civil 
discourse: while actors may hold negative opinions of each other the 
average opinion must be positive. The second constraint guarantees that no 
actor can hold an opinion that is too extreme. Note that the Cauchy-Schwartz inequality implies that
\[
P-Q^2\geq 0.
\]  
The quantity $P-Q^2$ represents some socially acceptable range of opinions, 
and thus is analogous to an entropy. The Lagrange multiplier that enforces 
this constraint ($\tau$, defined below) can therefore be thought of as being
like a temperature. Finally the third constraint guarantees that none of the positions are too extreme. 

The positions $x_i$ and the opinions 
$\gamma_{ij}$ evolve according to a constrained gradient flow. Following the 
method of Lagrange multipliers the free energy is given by 

\begin{equation}
\mathscr{D}:= \frac{1}{2}\sum_{i\neq j}\gamma_{ij}(x_i-x_j)^2-
\frac{1}{2}\frac{\mu}{\abs{E}}\sum_{i\neq j}\gamma_{ij}-\frac{1}{2}\frac{\tau}{\abs{E}}\sum_{i\neq j}\gamma^2_{ij}-\lambda\sum_i x_i^2,
\end{equation}

where $\tau,\mu,\lambda$ are the three Lagrange multipliers enforcing the 
constraints \eqref{eqn:gam_sum_cons}-\eqref{eqn:x_norm_cons}. The equations of motion for 
$x_i$ and $\gamma_{ij}$ are given by 
\begin{equation}
\left.\begin{aligned}
 \dot{x}_i&=-\PD{\mathscr{D}}{x_i}\\
 \dot{\gamma}_{ij}&=-\epsilon\PD{\mathscr{D}}{\gamma_{ij}}
\end{aligned}
\right\}
\end{equation}
or, more explicitly,
\begin{align}
 \dot x_i &= -2\paren{\frac{1}{2}\sum_{j\neq i}\gamma_{ij}(x_i-x_j) - \lambda x_i} && i\in (1\ldots N)\label{eqn:posn}\\ 
 \dot \gamma_{ij} &= -\epsilon\paren{\frac12(x_i - x_j)^2 - \frac{\tau}{\abs{E}} \gamma_{ij} - \frac{\mu}{2\abs{E}}} && i,j\in(1\ldots N) \label{eqn:opin}.
\end{align}
Here we have introduced a ``stiffness'' parameter, $\epsilon$, which measures the ease
with which actors change their opinions of one another.

The Lagrange multipliers are dynamic quantities, and are determined 
by the conditions that $P,Q,R$ be constant. For example
\[
 0=\dot{R} =2\sum_{i=1}^Nx_i\dot{x}_i
\]
from which we get that
\begin{equation}\label{eqn:lambda}
 \lambda =\frac{\dpair{\vx}{\vl\vx}}{\norm{\vx}^2}.
\end{equation}
Here ${\vl}$ is the graph Laplacian given by
\begin{equation}\label{eqn:graph_Laplacian}
  {\vl}_{ij} = {\vl}_{ij}(\vec\gamma)=
 \begin{cases}
 -\frac12(\gamma_{ij}+\gamma_{ji}), & i\neq j,\\
  \frac12\sum_{k \neq i}(\gamma_{ik}+\gamma_{ki}),& i = j.
\end{cases}
\end{equation}

We pause to discuss the physical interpretation of the dynamics. 
Eqn \eqref{eqn:posn} is a nonlinear ($\lambda$ depends on ${\vec x}$ 
as through (\ref{eqn:lambda})) heat flow representing the relaxation to 
consensus. Actors adjust their positions $x_i$ towards the 
positions of those that the actor respects ($\gamma_{ij}>0$) and away from 
the positions of those that the actor does not respect ($\gamma_{ij}<0$). 
We also note that models similar to the $x$ evolution have been 
previously been considered in physical applications to social sciences. 
In many of these models $x$ is an Ising-like spin variable, representing the 
choice between two options, rather than a continuous variable, 
but the general flavor is similar. For an introduction to the extensive 
literature on these models we refer the interested reader to the papers of 
Durlauf~\cite{Durlauf.1999}, Galam~\cite{Galam.2008}, Castellano,
Fortunato and Loreta~\cite{Castellano.Fortunato.Loreto.2009}, Lim~\cite{Lim.2011},
and Shi, Mucha and Durrett\cite{Shi.Mucha.Durrett.2013} .

The second equation \eqref{eqn:opin} reflects the tendency of actors to adjust their
opinions of other actors in response to relative differences in their positions. 
The first term on the right-hand side above represents the squared difference in the 
positions of the two actors, while the remaining two terms represent an average difference in 
opinion over the whole network. If the two actors hold positions that are 
close, relative to the average spread in position over the whole network, the 
opinion the actors hold of each other goes up ($\gamma_{ij}$ increases), while 
if their positions are relatively far apart, the opinion they hold of each 
other goes down.  

\section{Preliminaries}
For convenience we start by defining
\begin{align*}
&g_1(\vx,\vg)=\frac{1}{2\abs{E}}\sum_{i\neq j}\gamma_{ij}\\
&g_2(\vx,\vg)=\frac{1}{2\abs{E}}\sum_{i\neq j}\gamma^2_{ij}\\
&g_3(\vx,\vg)=\sum_{i=1}^N x_i^2
\end{align*}
 so that the constraints \eqref{eqn:gam_sum_cons}, \eqref{eqn:gam_norm_cons}, 
\eqref{eqn:x_norm_cons} are equivalent to $g_1(\vx,\vg)=Q$, $g_2(\vx,\vg)=P$ and $g_3(\vx,\vg)=R$
 respectively and 
 \[
\mathscr{D}(\vx,\vg)=\mathcal{D}-\mu g_1-\tau g_2-\lambda g_3.
\]
Defining
\[
\widetilde{\Omega} = \{(\vx,\vg)\in\R^{N}\times\R^{2|E|}:g_1=Q;\, g_2=P;\, g_3=R\},
\]
we see that the gradient flow is constrained to the set $\til{\Omega}$.
If $P>Q^2$, the sphere defined by $g_2(\vx,\vg)=0$ and the hyperplane defined by $g_1(\vx,\vg)=0$ 
intersect transversely.  An application of the implicit function theorem now shows that $\til{\Omega}$ is actually a 
smooth compact manifold of codimension $3$.  

Next we observe
\begin{prop}
For $\epsilon>0$, the model always tends to a state in which the opinions 
are symmetric, i.e. $\gamma_{ij}=\gamma_{ji}$.
\end{prop}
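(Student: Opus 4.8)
The plan is to look at the antisymmetric part of the opinion matrix and show it decays to zero. Define $a_{ij} = \tfrac{1}{2}(\gamma_{ij} - \gamma_{ji})$, the antisymmetric part of $\vg$, and consider the quantity $A(t) = \sum_{i \neq j} a_{ij}^2$. The key observation is that the position dynamics \eqref{eqn:posn} depends on the opinions only through the symmetric combination $\gamma_{ij} + \gamma_{ji}$ — indeed this is already visible in the definition \eqref{eqn:graph_Laplacian} of the graph Laplacian, which sees only $\gamma_{ij}+\gamma_{ji}$. Likewise the Lagrange multiplier $\mu$ (and $\tau$, $\lambda$) depends on the state only through symmetric quantities. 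Consequently, in the opinion equation \eqref{eqn:opin}, the driving terms $\tfrac{1}{2}(x_i-x_j)^2$, $\tfrac{\mu}{2|E|}$ are symmetric under $i \leftrightarrow j$, and only the $-\tfrac{\tau}{|E|}\gamma_{ij}$ term carries any antisymmetry.

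Concretely, subtracting the equation for $\dot\gamma_{ji}$ from that for $\dot\gamma_{ij}$, all the symmetric forcing cancels and one is left with
\[
\dot a_{ij} = -\epsilon \paren{ -\frac{\tau}{\abs{E}} a_{ij}} = \frac{\epsilon \tau}{\abs{E}} a_{ij},
\]
wait — the sign here is the crux, so let me instead compute $\DO{t} A$ directly: $\dot A = 2\sum_{i\neq j} a_{ij}\dot a_{ij} = \tfrac{2\epsilon\tau}{|E|}\sum_{i \neq j} a_{ij}^2 = \tfrac{2\epsilon\tau}{|E|} A$ if the relaxation term has the sign that feeds back negatively on $a$, and $\dot A = -\tfrac{2\epsilon\tau}{|E|}A$ otherwise. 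So I would next pin down the sign of $\tau$ along the flow. From the constraint that $g_2 = P$ is conserved one obtains an explicit formula for $\tau$ analogous to \eqref{eqn:lambda}; I expect $\tau$ to come out with a definite sign (or at least to be bounded) because $P - Q^2 > 0$ keeps the opinion vector away from the degenerate configuration. Granting $\tau > 0$ with the correct coupling sign, Gr\"onwall gives $A(t) \le A(0) e^{-c t}$ with $c = 2\epsilon\tau_{\min}/|E| > 0$, so the antisymmetric part decays exponentially and every $\omega$-limit point has $\gamma_{ij} = \gamma_{ji}$.

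The main obstacle is precisely controlling $\tau$: a priori $\tau$ is a dynamical quantity with no obvious sign, and if $\tau$ could become large and negative the antisymmetric part could in principle grow. I would handle this either by (i) deriving the closed-form expression for $\tau$ from $\dot g_2 = 0$ and showing the relevant sign/bound using compactness of $\til\Omega$ and the strict inequality $P > Q^2$, or, more robustly, (ii) arguing directly on $A$ without needing the sign of $\tau$: since the symmetric forcing is common to $\dot\gamma_{ij}$ and $\dot\gamma_{ji}$, the antisymmetric sector is \emph{linear} in $a$ with coefficient $\epsilon\tau(t)/|E|$, and one can show that on the steady states this coefficient is nonzero with the damping sign by evaluating $\tau$ at equilibrium. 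A cleaner route still is energetic: show that $\mathscr{D}$ restricted to $\til\Omega$ is a Lyapunov function (it decreases along the flow, being a constrained gradient flow on a compact manifold), invoke LaSalle so that $\omega$-limit sets consist of equilibria, and then check directly from \eqref{eqn:opin} that \emph{every} equilibrium satisfies $\tfrac{1}{2}(x_i-x_j)^2 = \tfrac{\tau}{|E|}\gamma_{ij} + \tfrac{\mu}{2|E|}$, whose right-hand side forces $\gamma_{ij} = \gamma_{ji}$ as soon as $\tau \neq 0$ (and the case $\tau = 0$ can be excluded or handled separately). I would present the energy/LaSalle argument as the main line and relegate the explicit $\tau$ computation to a remark.
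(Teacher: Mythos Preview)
Your core computation---that the antisymmetric part $s_{ij}=\gamma_{ij}-\gamma_{ji}$ decouples and satisfies the scalar linear ODE $\dot s_{ij}=\dfrac{2\epsilon\tau}{|E|}\,s_{ij}$---is exactly what the paper does. Where you diverge is in the handling of $\tau$: you try to establish the sign of $\tau$ \emph{first} (via explicit formulas, bounds from $P>Q^2$, or evaluation at equilibrium) and then feed it into Gr\"onwall. The paper runs the logic in the opposite direction, and this is the point you are missing: since the flow lives on the \emph{compact} manifold $\til\Omega$, the solution $s_{ij}(t)=s_{ij}(0)\exp\bigl(\tfrac{2\epsilon}{|E|}\int_0^t\tau\bigr)$ is a priori bounded, which already forbids exponential growth and forces $s_{ij}\to 0$ along the approach to a (stable) limit; the conclusion $\tau<0$ at stable fixed points is then a \emph{consequence}, not an input. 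So you never need to compute or bound $\tau$ beforehand.

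Your alternative ``energy/LaSalle'' route---gradient flow on a compact manifold $\Rightarrow$ $\omega$-limit sets consist of equilibria $\Rightarrow$ at any equilibrium the fixed-point relation $\tfrac12(x_i-x_j)^2=\tfrac{\tau}{|E|}\gamma_{ij}+\tfrac{\mu}{2|E|}$ is symmetric in $i,j$, hence $\gamma_{ij}=\gamma_{ji}$ whenever $\tau\neq 0$---is correct and is essentially a more careful formalization of the paper's first sentence (``the model is a gradient flow on the compact set $\til\Omega$\ldots and thus always tends to a local energy minimizer''). It is arguably cleaner than the paper's somewhat informal ``compactness forbids exponential growth'' step, but note that you still owe the reader the exclusion of $\tau=0$ at equilibrium; the paper sidesteps this by reading off $\tau<0$ from stability of the $s_{ij}$-mode rather than treating it as an edge case. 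Either way, drop the attempt to prove $\tau>0$ a priori---that sign guess is backwards and is precisely what the paper's argument renders unnecessary.
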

To see this we note that 
the model is a gradient flow on the compact set $\til{\Omega}$ defined above, and thus always tends to 
a local energy minimizer i.e. a critical point of $\mathscr{D}$. From \eqref{eqn:opin} we see that the symmetric 
difference in the opinions $s_{ij}:=\gamma_{ij}-\gamma_{ji}$ satisfies
\[
\dot{s}_{ij} = \frac{2\epsilon\tau}{\abs{E}}s_{ij}.
\]
Since the compactness of $\til{\Omega}$ forbids exponential growth,we get that $\gamma_{ij}-\gamma_{ji}$ tends to zero asymptotically.
This in turn implies that $\tau<0$ at a stable fixed point. Since the model always tends to a state in which the opinions are
symmetric, we will, for the remainder of the paper, assume that $\gamma_{ij} = \gamma_{ji}$.

For the reader's convenience, we rewrite the equations under the symmetry assumption:
\begin{align*}
&{\mathcal D}(\vx,\vg)=\sum_{i< j} \gamma_{ij} (x_i-x_j)^2,\\
&g_1=\frac{1}{\abs{E}}\sum_{i<j}\gamma_{ij}\\
&g_2=\frac{1}{\abs{E}}\sum_{i< j}\gamma^2_{ij}\\
&g_3=\sum_{i=1}^N x_i^2
\end{align*}
and
\[
 \Omega = \{(\vx,\vg)\in\R^{N}\times\R^{|E|}:g_1=Q;\, g_2=P;\, g_3=R\},
\]
which is an $N+\abs{E}-3$ dimensional manifold.

As in the Introduction we derive expressions for the Lagrange multipliers $\mu$ and $\tau$
using the fact that $\dot{Q}=0$ and $\dot{R}=0$.
Since 
\[
\dot{\gamma}_{ij} =-\epsilon\brac{(x_i-x_j)^2-\frac{\mu}{\abs{E}}-\frac{2\tau}{\abs{E}}\gamma_{ij}},
\] 
we see that $\dot{Q}=0$ implies
\begin{align*}
{\sum_{i< j}(x_i-x_j)^2-\mu-2\tau Q}=0,
\end{align*}
and $\dot{R}=0$ implies
\begin{align*}
{\sum_{i< j}\gamma_{ij}(x_i-x_j)^2-\mu Q-2\tau P}=0.
\end{align*}
 Solving for $\mu$ and $\tau$ gives
 \begin{align}
 \tau&= \frac{1}{2(Q^2-P)}\brac{Q\sum_{i< j}(x_i-x_j)^2-\sum_{i<j}\gamma_{ij}(x_i-x_j)^2}\label{eqn:tau}\\
 \mu& = \frac{1}{(P-Q^2)}\brac{P\sum_{i<j}(x_i-x_j)^2-Q\sum_{i< j}\gamma_{ij}(x_i-x_j)^2}\label{eqn:mu}
 \end{align}

For an arbitrary complex matrix A, we let $\sigma(A)$ denote its spectrum. If $\sigma(A)\subset\R$, we also denote by 
$\sigma_{max}(A)$ and $\sigma_{min}(A)$ its largest and smallest eigenvalues respectively.

\section{Fixed Points of the Model}
Since we know that the model will (generically) converge to local energy 
minimizers we look at the possible fixed points of the flow. These
are given by the solutions to the equations 
\begin{align*}
{\vl} {\vx} &= \lambda \vx \\
\frac{2\tau}{|E|}\gamma_{ij} + \frac{\mu}{|E|} &= (x_i-x_j)^2.  
\end{align*}
The fixed points represent critical points of the free energy, but they 
are not necessarily energy minimizers and may otherwise represent critical 
points of the energy corresponding to unstable equilibria.  

These are simultaneous polynomial equations, so in general it is difficult to 
find all solutions, but we have been able to find a number of exact solutions 
corresponding to all of the observed behaviors in the system. 
 
We begin by noting that the vector $\vx = (1,1,1\ldots,1)^t$
is always in the null-space of ${\vl}$ and thus is always a fixed 
point of the model regardless of the opinions 
$\gamma_{ij}.$ This makes sense: if all actors hold the same position there is 
nothing to drive the conflict. Thus we refer to this as the \emph{consensus state}.
Technically speaking the consensus state is not a critical \emph{point} since, as we have
observed, it exists for all values of $\vg$:

\begin{defn}
The consensus state is the (critical) manifold $\mathcal{C}$ defined by
\[\mathcal{C}:=\{(\vx,\vg)\in\Omega: \vx= \sqrt{R/N}(1,1,1\ldots,1)^t\}.\]
\end{defn}

To describe the other set of fixed points we introduce some notation.
Let $I,J$ be subsets of the vertex set $V(\Gamma)$ such that
$I\cap J=\varnothing$ and $I\cup J=V(\Gamma)$. Consider the matrix
\begin{equation}\label{eqn:bipartite}
M_{ij}=
\begin{cases}
-\alpha;& i\neq j\text{\, in same subset}\\
-\beta;& i\neq j\text{\, in different subset}\\
(\abs{I}-1)\alpha+\abs{J}\beta; & i=j\in I\\
(\abs{J}-1)\alpha+ \abs{I}\beta; &i=j\in J.
\end{cases}
\end{equation}
Note that $M$, defined above, is symmetric and also has row and column sum to zero.  We then have the following 
characterization if its spectrum:

\begin{lem} The spectrum of $M$, defined by \eqref{eqn:bipartite}, is given by
\begin{equation}
  \sigma(M)=
 \begin{cases} 
 0,&\text{\,\, of multiplicity}=1\\
 N\beta,&\text{\,\, of multiplicity}=1\\
 \abs{I}\alpha+\abs{J}\beta,&\text{\,\, of multiplicity}=\abs{I}-1\\
  \abs{I}\beta+\abs{J}\alpha,&\text{\,\, of multiplicity}=\abs{J}-1\\
 \end{cases}
 \end{equation}
 \end{lem}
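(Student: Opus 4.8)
The plan is to exploit the block structure of $M$ directly. Order the vertices so that those in $I$ come first, followed by those in $J$. Then $M$ has the $2\times 2$ block form
\[
M=\begin{pmatrix} A & B \\ B^t & C \end{pmatrix},
\]
where $A$ is $\abs{I}\times\abs{I}$ with $-\alpha$ off the diagonal and $(\abs{I}-1)\alpha+\abs{J}\beta$ on it, $C$ is $\abs{J}\times\abs{J}$ with $-\alpha$ off the diagonal and $(\abs{J}-1)\alpha+\abs{I}\beta$ on it, and $B$ is the all-$(-\beta)$ matrix of size $\abs{I}\times\abs{J}$. Writing $\ind_I,\ind_J$ for the all-ones vectors on the two blocks, one has $A = \alpha(\abs{I} I_{\abs{I}} - \ind_I\ind_I^t) + \abs{J}\beta\, I_{\abs{I}}$ and similarly for $C$, while $B = -\beta\,\ind_I\ind_J^t$.

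The key observation is that $M$ preserves the decomposition of $\R^N$ into (i) the two-dimensional space $W$ spanned by $\ind_I$ (extended by zero on $J$) and $\ind_J$ (extended by zero on $I$), and (ii) its orthogonal complement, which splits further as $W_I^\perp\oplus W_J^\perp$ where $W_I^\perp=\{v:\operatorname{supp}(v)\subseteq I,\ v\perp\ind_I\}$ has dimension $\abs{I}-1$ and similarly $W_J^\perp$ has dimension $\abs{J}-1$. On $W_I^\perp$ the rank-one terms $\ind_I\ind_I^t$ and $B$ act as zero, so $M$ restricts to the scalar $\abs{I}\alpha+\abs{J}\beta$; on $W_J^\perp$ it restricts to $\abs{J}\alpha+\abs{I}\beta$. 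That accounts for eigenvalues $\abs{I}\alpha+\abs{J}\beta$ with multiplicity $\abs{I}-1$ and $\abs{I}\beta+\abs{J}\alpha$ with multiplicity $\abs{J}-1$, a total of $N-2$ eigenvalues. The remaining two live on $W$: restricting $M$ to the basis $\{\ind_I,\ind_J\}$ gives a $2\times 2$ matrix whose $(1,1)$ entry is obtained from a row sum over $I$, etc.; a short computation yields the matrix $\begin{pmatrix} \abs{J}\beta & -\abs{J}\beta \\ -\abs{I}\beta & \abs{I}\beta\end{pmatrix}$ (using that each row of $M$ sums to zero to read off the action). Its trace is $N\beta$ and its determinant is $0$, so its eigenvalues are $0$ and $N\beta$. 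This gives the claimed spectrum, and since $\abs{I}-1+\abs{J}-1+1+1=N$ the multiplicities are exhausted.

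The one point requiring genuine care — the main obstacle — is the bookkeeping on $W$: I must compute the action of $M$ on $\ind_I$ and $\ind_J$ correctly, since $W$ is \emph{not} an eigenspace but only invariant, and the $2\times 2$ reduced matrix need not be symmetric (indeed it is not). The cleanest way is to verify the eigenvectors explicitly: $\ind_N=(1,\dots,1)^t$ gives eigenvalue $0$ (already noted in the text, since $M$ has zero row sums), and $v = \abs{J}\ind_I - \abs{I}\ind_J$ should give eigenvalue $N\beta$ — this follows because $v\perp\ind_N$ so the diagonal-plus-$(-\alpha)$ part of each block contributes $(\abs{I}\alpha+\abs{J}\beta)$ or $(\abs{J}\alpha+\abs{I}\beta)$ times the block of $v$ after subtracting the rank-one piece, and careful cancellation of the $\alpha$-terms against the $B$-block leaves exactly $N\beta\, v$. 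Verifying these two eigenpairs by direct substitution, together with the $N-2$ eigenvectors supported on one block and orthogonal to that block's all-ones vector, completes the proof; I would present it as four explicit eigenvector computations rather than via the block reduction, as that avoids any ambiguity about the non-symmetric $2\times 2$ piece.
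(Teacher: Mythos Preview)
Your proposal is correct and follows essentially the same route as the paper: both decompose $\R^N$ into the two-dimensional space of block-constant vectors (where the action reduces to the identical non-symmetric $2\times 2$ matrix $\begin{pmatrix}\abs{J}\beta & -\abs{J}\beta\\ -\abs{I}\beta & \abs{I}\beta\end{pmatrix}$ with eigenvalues $0$ and $N\beta$) and the two complementary spaces of zero-sum vectors supported on a single block (where $M$ acts as the scalars $\abs{I}\alpha+\abs{J}\beta$ and $\abs{J}\alpha+\abs{I}\beta$). The paper likewise verifies the eigenpairs $\ind_N$ and $\abs{J}\ind_{\abs{I}}\oplus(-\abs{I})\ind_{\abs{J}}$ directly, so even your remark about handling the non-symmetric reduction by explicit substitution matches what the paper does.
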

 
 \begin{proof}
 The proof amounts to a computation. Let $\vec{u} = \ind_{N}$ and  $\vec{v}=\begin{pmatrix}
 a\ind_{\abs{I}}\\
 b\ind_{\abs{J}}\\
 \end{pmatrix}$, where $\ind_{\abs{J}}$ is a vector of all ones of length $\abs{J}$. Then $M\vec{u} =0$
 since the matrix $M$ has row sum 0. Another calculation shows that 
\begin{equation*}
L\vec{v}=(a-b)\begin{pmatrix}
 \beta\abs{J}\ind_{\abs{I}}\\
 -\beta\abs{I}\ind_{\abs{J}}\\
 \end{pmatrix}
\end{equation*}
Thus on the two dimensional subspace of vectors of the form $\vec{v}=a\ind_{\abs{I}}\oplus b\ind_{\abs{J}}\\$ 
we obtain that the action of $M\vec{v}$ is equivalent to 
\begin{equation*}
\begin{pmatrix}
\abs{J}\beta&-\abs{J}\beta\\
-\abs{I}\beta&\abs{I}\beta
\end{pmatrix}
\begin{pmatrix}
a\\
b
\end{pmatrix}=
(a-b)\begin{pmatrix}
 \beta\abs{J}\\
 -\beta\abs{I}\\
 \end{pmatrix}
\end{equation*}
The eigenvalues of this $2\times2$ matrix are easily computed to be $0$ and $N\beta$ with corresponding (unnormalized) eigenvectors $\ind_{N}$ and $\begin{pmatrix}
 \abs{J}\ind_{\abs{I}}\\
 -\abs{I}\ind_{\abs{J}}\\
 \end{pmatrix}$ respectively corresponding to $a-b=0$ and $a-b=N$.

Let $\vec{w}$ be a vector of the form $\vec{w}=(q_1,\ldots,q_{\abs{I}},0,\ldots,0)^{t}$, that is $\vec{w}\in \R^{\abs{I}}\oplus\vec{0}$. A computation shows that
\begin{equation*}
 M\vec{w} =
(\alpha\abs{I}+\beta\abs{J})\begin{pmatrix}
 q_1\\
 \vdots\\
q_{\abs{I}}\\
 {0}\\
\vdots\\
0
 \end{pmatrix}-
\sum_{i=1}^{\abs{I}}q_i\begin{pmatrix}
 \alpha\\
 \vdots\\
\alpha\\
 \beta\\
\vdots\\
\beta
\end{pmatrix}
\end{equation*}
Choosing $\vec{q}\in \R^{\abs{I}}$ such that $\sum\limits_{i=1}^{\abs{I}}q_i=0$ we see
 that $(\alpha\abs{I}+\beta\abs{J})$ is an eigenvalue with a $\abs{I}-1$ 
dimensional eigenspace since the equation $\sum\limits_{i=1}^{\abs{I}}q_i=0$ defines
a hyperplane through the origin in $\R^{\abs{I}}$.

Similarly, by considering the vector $\vec{p} =(0,\ldots,0,p_1,\ldots,p_{\abs{J}})^{t}\in\vec{0}\oplus\R^{\abs{J}}$,
 we get that $(\alpha\abs{J}+\beta\abs{I})$ is an eigenvalue with a $\abs{J}-1$ dimensional eigenspace.
\end{proof}

We can now define
\begin{defn}
 The bipartite state is one corresponding to the (unnormalized) vector $\begin{pmatrix}
 a\ind_{\abs{I}}\\
 b\ind_{\abs{J}}\\
 \end{pmatrix}$
where $I$ and $J$ are non-empty partitions of the vertex set $V(\Gamma)$.
\end{defn}

Note the obvious fact that the trivial partition corresponds to exactly to the consensus state. We now have the following:
\begin{lem}
For a complete graph there exist a critical point of the constrained gradient flow which is of bipartite form.
\end{lem}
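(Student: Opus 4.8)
The plan is to exhibit a fixed point directly inside the bipartite ansatz. Fix a nontrivial partition $V(\Gamma)=I\sqcup J$ with $\abs{I},\abs{J}\ge 1$, and look for a solution in which the opinions take only two values — $\gamma_{ij}=\alpha$ when $i,j$ lie in the same part and $\gamma_{ij}=\beta$ when they lie in different parts — while the positions take two values, $x_i=a$ for $i\in I$ and $x_i=b$ for $i\in J$. Under this ansatz the graph Laplacian ${\vl}(\vg)$ of \eqref{eqn:graph_Laplacian} is exactly the matrix $M$ of \eqref{eqn:bipartite} with these same $\alpha,\beta$ (one checks the off-diagonal and both diagonal cases agree), so by the preceding Lemma the vector with first $\abs{I}$ entries equal to $\abs{J}$ and last $\abs{J}$ entries equal to $-\abs{I}$ is an eigenvector of ${\vl}$ with eigenvalue $N\beta$, and since it is orthogonal to $\ind_N$ it is not the consensus state. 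First I would take $\vx$ to be a scalar multiple $c$ of this eigenvector; then the position equation ${\vl}\vx=\lambda\vx$ holds automatically with $\lambda=N\beta$.

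Next I would peel off the remaining equations one by one, tracking which unknown each pins down. The constraint $g_3=R$ forces $c^2=R/(\abs{I}\abs{J}N)$, hence $(a-b)^2=c^2N^2=RN/(\abs{I}\abs{J})$. The opinion fixed-point equation $\tfrac{2\tau}{\abs{E}}\gamma_{ij}+\tfrac{\mu}{\abs{E}}=(x_i-x_j)^2$ splits into two cases: for same-part edges the right-hand side is $0$, giving $2\tau\alpha+\mu=0$; for cross edges it is $(a-b)^2$, giving $2\tau\beta+\mu=\abs{E}RN/(\abs{I}\abs{J})$. These two linear relations determine $\tau$ and $\mu$ as functions of $\alpha,\beta$, namely $\tau=\abs{E}RN/\bigl(2\abs{I}\abs{J}(\beta-\alpha)\bigr)$ and $\mu=-2\tau\alpha$, provided $\alpha\ne\beta$; their consistency with \eqref{eqn:tau}--\eqref{eqn:mu} is automatic at a fixed point since $\dot Q=\dot R=0$ there. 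What is left is the pair $g_1=Q$, $g_2=P$, now two equations in the two unknowns $\alpha,\beta$.

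The last step is to solve this reduced system. Writing $m:=\binom{\abs{I}}{2}+\binom{\abs{J}}{2}$ for the number of within-part edges and $n:=\abs{I}\abs{J}$ for the number of cross edges (so $m+n=\abs{E}$), the constraints read $m\alpha+n\beta=\abs{E}Q$ and $m\alpha^2+n\beta^2=\abs{E}P$. Eliminating $\beta$ gives a quadratic in $\alpha$ whose discriminant works out to $4mn(P-Q^2)$, which is positive because $P>Q^2$ (the case $P=Q^2$ being the degenerate one in which all opinions coincide). Hence there are two real roots, $\alpha=Q\pm\sqrt{n(P-Q^2)/m}$ with $\beta=Q\mp\sqrt{m(P-Q^2)/n}$; in particular $\alpha\ne\beta$, so the formulas for $\tau,\mu$ above are legitimate. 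Assembling $\vx$ as $c$ times the eigenvector together with these $\alpha,\beta$ and the induced $\lambda,\mu,\tau$ then produces a critical point of the constrained gradient flow whose position vector is of bipartite form. I would also record that the root with $\alpha>0>\beta$ — a genuinely balanced ``friends within, enemies between'' configuration — is the one that appears once $P$ exceeds $\tfrac{m+n}{m}Q^2$, and that for it $\beta-\alpha<0$, so $\tau<0$, in line with the stability condition from the Proposition.

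I do not expect a serious obstacle. The two points needing real care are (i) verifying that under the ansatz ${\vl}$ coincides with $M$ with the stated identification of parameters, so the earlier Lemma applies verbatim; and (ii) keeping clean track of which of the six scalars $\alpha,\beta,c,\lambda,\mu,\tau$ is fixed by which of the six scalar equations (three constraints and three fixed-point relations), so that the system collapses to the two-variable quadratic — whose solvability is precisely the inequality $P>Q^2$ already built into the geometry of $\Omega$.
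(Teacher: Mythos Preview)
Your proposal is correct and follows essentially the same route as the paper's proof: both posit the two-value ansatz for $\vx$ and $\vg$, invoke the spectrum lemma for $M$ to satisfy ${\vl}\vx=\lambda\vx$ with $\lambda=N\beta$, read off the two relations between $\alpha,\beta,\mu,\tau$ from the opinion equation, and then solve the $g_1=Q$, $g_2=P$ system for $\alpha,\beta$ via the quadratic with discriminant proportional to $P-Q^2$. Your $m,n$ notation is just a relabeling of the paper's $r=\abs{I}\abs{J}/\abs{E}$, and your explicit bookkeeping of which scalar equation fixes which unknown is a nice organizational touch but not a substantive departure.
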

\begin{proof}
 The critical points of the flow satisfy $\PD{\mathscr{D}}{x_i}=0$ and $\PD{\mathscr{D}}{\gamma_{ij}}=0$. Thus is follows that we must solve:
\begin{equation*}
\left.\begin{aligned}
 &2\sum_{i< j}\gamma_{ij}(x_i-x_j) -2\lambda x_i =0;&&(i=1\ldots N),\\
 &(x_i-x_j)^2-\frac{\mu}{\abs{E}}-\frac{2\tau}{\abs{E}}\gamma_{ij} =0;&&(i,j=1\ldots N).
\end{aligned}
\right\}
\end{equation*}
As noted earlier, this set of equations is equivalent to:
\begin{equation}\label{eqn:gamma_crit_pt}
 \gamma_{ij}=\frac{\abs{E}}{2\tau}(x_i-x_j)^2-\frac{\mu}{2\tau},
\end{equation}
and the eigenvalue equation 
\begin{equation}\label{eqn:x_crit_pt}
 {\bf L}(\gamma)\vx =\lambda\vx,
\end{equation}
 with the matrix ${\bf L}(\gamma)$ given by:
\begin{equation*}
 {\bf L}=
\begin{pmatrix}
\sum\limits_{j\neq 1}\gamma_{1j}&-\gamma_{12}&\ldots&-\gamma_{1N}\\
-\gamma_{21}&\sum\limits_{j\neq 2}\gamma_{2j}&\ldots&-\gamma_{2N}\\
\vdots&\vdots&\ddots&\vdots\\
-\gamma_{N1}&-\gamma_{N2}&\ldots&\sum\limits_{j\neq N}\gamma_{Nj}
\end{pmatrix}.
\end{equation*}
 We will now show that we have a solution of the form \eqref{eqn:bipartite}. We choose the
 eigenvector $\vx$ corresponding to the eigenvalue $\lambda=N\beta$ normalized by the contraint $\norm{\vx}^2 = R$.
In other words we choose 
\[\vx=\sqrt{\frac{R}{\abs{J}\abs{I}N}}\begin{pmatrix}
 \abs{J}\ind_{\abs{I}}\\
 -\abs{I}\ind_{\abs{J}}\\
 \end{pmatrix}.\]
For $i\neq j$ in the same subsets we have that $x_i-x_j =0$ and thus, using \eqref{eqn:gamma_crit_pt}, we see that 
\[\alpha=-\frac{\mu}{2\tau}.\]
For $i,j$ in different subsets, we see that $(x_i-x_j)^2 =\dfrac{RN}{\abs{I}\abs{J}}$ and using \eqref{eqn:gamma_crit_pt} we get that
\[\beta =\frac{\abs{E}}{2\tau}\frac{RN}{\abs{I}\abs{J}}-\frac{\mu}{2\tau}.\]
Since the graph is complete, it follows that there are $\abs{I}(\abs{I}-1)/{2}$ and $\abs{J}(\abs{J}-1)/{2}$ edges in the ``cliques'' with $\abs{I}$ and $\abs{J}$
vertices respectively, and $\abs{I}\abs{J}$ edges between the two cliques. Thus the constraints \eqref{eqn:gam_sum_cons} and \eqref{eqn:gam_norm_cons} then imply
\begin{align*}
 Q&=\frac{2}{N(N-1)}\brac{\frac{\abs{I}(\abs{I}-1)}{2}+\frac{\abs{J}(\abs{J}-1)}{2}}\alpha +\frac{2\abs{I}\abs{J}\beta}{N(N-1)}\\
 P&=\frac{2}{N(N-1)}\brac{\frac{\abs{I}(\abs{I}-1)}{2}+\frac{\abs{J}(\abs{J}-1)}{2}}\alpha^2 +\frac{2\abs{I}\abs{J}\beta^2}{N(N-1)}\\
\end{align*}
Solving this system of equations is straightforward but lengthy. When the dust settles, we obtain
\begin{align}
\alpha &= Q\pm\nu\sqrt{\frac{r}{1-r}}\\
\beta &= Q\mp\nu\sqrt{\frac{1-r}{r}}
\end{align}
where 
\[
\nu^2 = P-Q^2,
\] 
and
 \[
 r=\dfrac{\abs{I}\abs{J}}{\abs{E}}
 \]
 is the fraction of the total number of edges that connect the different cliques. Now
\[\frac{R\abs{E}N}{2\tau\abs{I}\abs{J}} =\beta-\alpha = \mp\nu\brac{\sqrt{\frac{1-r}{r}}+\sqrt{\frac{r}{1-r}}}\]
Hence
\begin{equation}
\tau =\mp\frac{RN}{2\nu rk(r)},
\end{equation}
and
\begin{equation}
\mu =\pm\frac{RN}{\nu rk(r)}\paren{Q\pm\nu\sqrt{\frac{r}{1-r}}},
\end{equation}
while 
\begin{equation}
\lambda:=N\beta=N\paren{Q\mp\nu\sqrt{\frac{r}{1-r}}}
\end{equation}
where we have defined \[k(r)=\sqrt{\frac{1-r}{r}}+\sqrt{\frac{r}{1-r}} =\frac{1}{\sqrt{r(1-r)}}.\]
This completes the proof.
\end{proof} 

\section{Stability of the Consensus State}
\subsection{Global Stability}
 To motivate the results, we assume first that $\epsilon =0$ so that the gradient flow becomes
\begin{equation}
\left.\begin{aligned}
 \dot{\vx}&=-2(\vl\vx-\lambda\vx)\\
 \dot{\gamma}_{ij}&=0.
\end{aligned}
\right\}
\end{equation}
Recall that the constraints imply $\vx\cdot\dot{\vx} =0$ from which we derived
\begin{equation*}
\lambda = \frac{\dpair{\vl\vy}{\vy}}{\norm{\vx}^2}.
\end{equation*}
Writing $\vx(t) =u(t)\ind_{N}+\vy(t)$ with $\vy\cdot\ind_N =0$ we see that
\begin{align*}
\dot{u}&= 2 u\frac{\dpair{\vl\vy}{\vy}}{\norm{\vy}^2 + Nu^2}\\
\dot{\vy}&= -2\paren{\vl\vy -\vy\frac{\dpair{\vl\vy}{\vy}}{\norm{\vy}^2 + Nu^2}}.
\end{align*}
Let $v(t)=\norm{\vy(t)}$. It follows from the constraint that \[ Nu^2(t)+v^2(t) = Nu^2(0)+v^2(0)= R.\]
In particular
\[
v\dot{v}=\frac{1}{2}\DO{t}v^2 =-2\paren{\dpair{\vl\vy}{\vy} -\dpair{\vy}{\vy}\frac{\dpair{\vl\vy}{\vy}}{R}},
\]
which implies that
\begin{equation*}
\dot{v} = -2v\paren{1-\frac{v^2}{R}}\frac{\dpair{\vl\vy}{\vy}}{\norm{\vy}^2}.
\end{equation*}
Suppose first that $v(0) =0$. Then one sees that $u(t) = \sqrt{\frac{R}{N}}$, that is $v(t) =0$ for all $t$.
Now assume that $v(0) \in (0,\sqrt{R}]$ and that the matrix $\vl$ has kernel precisely $\ind_N$ with
all other eigenvalues positive. It then follows that
\[
\frac{\dpair{\vl\vy}{\vy}}{\norm{\vy}^2}\geq \sigma_{min} >0,
\]
from which we see that $v$ is monotone decreasing and in particular that 
\begin{equation*}
\dot{v} \leq -2\sigma_{min}v\paren{1-\frac{v^2(0)}{R}}.
\end{equation*}
A direct argument or the use of Gronwall's inequality shows that $v=0$ is exponentially attracting.
Thus we have proved

\begin{thm}[Stability of Consensus State I]\label{thm:stability_cons_I}
Suppose $\epsilon=0$. Suppose also that ${\vl}(0)$ is positive semi-definite with a 1 dimensional kernel.
Then it holds that 
\begin{equation*}
\lim_{t\to\infty}\vx(t) =\sqrt{\frac{R}{N}}\ind_N
\end{equation*}
That is, the consensus state is globally asymptotically stable.
\end{thm}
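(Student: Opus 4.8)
The plan is simply to close the argument already begun before the statement. Since $\epsilon=0$ the matrix $\vl=\vl(\vg(0))$ is \emph{constant}, positive semi-definite, with $\ker\vl=\mathrm{span}(\ind_N)$, and writing $\vx(t)=u(t)\ind_N+\vy(t)$ with $\vy\cdot\ind_N=0$, $v=\norm{\vy}$, the computation above has already reduced the full system to the scalar equation
\[
\dot v = -2v\paren{1-\frac{v^2}{R}}\frac{\dpair{\vl\vy}{\vy}}{\norm{\vy}^2},
\]
coupled to the conservation law $Nu^2+v^2=R$. So the whole proof comes down to (i) showing $v(t)\to 0$ exponentially and (ii) upgrading this to convergence of $\vx$ itself. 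Step (ii) is immediate: once $v(t)\to 0$ we get $u^2(t)=(R-v^2(t))/N\to R/N$, and since $u$ solves $\dot u = 2u\,\dpair{\vl\vy}{\vy}/(\norm{\vy}^2+Nu^2)$ — linear in $u$ with coefficient bounded above by $2\sigma_{max}(\vl)$ — the sign of $u$ is preserved, so $u(t)\to\mathrm{sgn}(\vx(0)\cdot\ind_N)\sqrt{R/N}$ and hence $\vx(t)=u(t)\ind_N+\vy(t)\to\sqrt{R/N}\,\ind_N$ (for the stated sign, take $\vx(0)\cdot\ind_N>0$).

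For step (i), the hypothesis on $\vl$ supplies a spectral gap: the Rayleigh quotient satisfies $\dpair{\vl\vy}{\vy}/\norm{\vy}^2\ge\sigma_{min}>0$ for every nonzero $\vy\perp\ind_N$, where $\sigma_{min}$ is the smallest positive eigenvalue of $\vl$. Assume first $0<v(0)<\sqrt R$. The right-hand side of the $v$-equation is $\le 0$ (each factor is nonnegative, using $v^2=R-Nu^2\le R$), so $v$ is non-increasing; consequently $1-v^2(t)/R\ge 1-v^2(0)/R=:c>0$ for all $t\ge 0$, and therefore
\[
\dot v \le -2\sigma_{min}\,c\,v .
\]
Gronwall's inequality then gives $v(t)\le v(0)e^{-2\sigma_{min}ct}\to 0$. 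The degenerate case $v(0)=0$ is the consensus manifold itself, which is stationary.

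The one genuine subtlety — and the only place care is needed — is the boundary case $v(0)=\sqrt R$, i.e. $\vx(0)\perp\ind_N$. There $u(0)=0$, and by uniqueness $u\equiv 0$, so this set is invariant; on it the flow is exactly the projected ``power-iteration'' gradient flow $\dot\vy=-2\paren{\vl\vy-\vy\,\dpair{\vl\vy}{\vy}/R}$ on the sphere $\norm{\vy}=\sqrt R$ in $\ind_N^\perp$, which drives $\vy$ toward $\pm\sqrt R$ times a minimal eigenvector of $\vl|_{\ind_N^\perp}$, \emph{not} toward consensus. Thus the conclusion holds precisely for initial data with $\vx(0)\cdot\ind_N\neq 0$ — an open, dense, full-measure subset of $\Omega$ — and the statement should be read with this (entirely generic) proviso; on that set the differential inequality above yields global exponential attraction to the consensus state, which is what we wanted.
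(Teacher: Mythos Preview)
Your argument is essentially the paper's own: the same orthogonal decomposition $\vx=u\ind_N+\vy$, the same scalar equation for $v=\norm{\vy}$, the same spectral-gap bound on the Rayleigh quotient, and the same Gronwall step. You are in fact more careful than the paper in two places: you explicitly close the loop from $v\to 0$ to $\vx\to\pm\sqrt{R/N}\,\ind_N$ via sign-preservation for $u$, and you correctly flag that the boundary case $v(0)=\sqrt{R}$ (equivalently $\vx(0)\cdot\ind_N=0$) is invariant and does \emph{not} converge to consensus---the paper's inclusion of this endpoint in $(0,\sqrt R]$ yields only the vacuous inequality $\dot v\le 0$ there, so ``global'' should indeed be read generically.
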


The demonstration above relied mostly on the fact that we had a \emph{spectral gap}. 
We know that $0$ is always an eigenvalue of $\vl$  but the assumption $\epsilon =0$ was enough
to guarantee that the next eigenvalue was strictly positive if $\vl(0)$ had all nonnegative eigenvalues.
If $\epsilon\neq 0$, we can still find sufficient conditions guaranteeing the existence of a spectral gap.

\begin{thm}[Sufficient Conditions for Global Stability]
Suppose that 
\[
\frac{P-Q^2}{Q^2}<\frac{1}{N-1}.
\]
Then $\vl$ is always positive semi-definite and the consesnsus state is a global minimizer.
\end{thm}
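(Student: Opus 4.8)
The plan is to prove the stronger statement that $\vl(\vg)$ is positive semi-definite for \emph{every} $\vg$ meeting the opinion constraints $g_1=Q$, $g_2=P$; the claim about the consensus state is then immediate, since $\mathcal D(\vx,\vg)=\dpair{\vl(\vg)\vx}{\vx}\ge 0$ on all of $\Omega$ while $\mathcal D$ vanishes identically on the consensus manifold $\mathcal C$ (there $x_i\equiv\sqrt{R/N}$, so every $x_i-x_j=0$), whence $\mathcal C$ realizes the global minimum value $0$ of $\mathcal D$ on $\Omega$.

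For the semidefiniteness I would argue directly with the quadratic form $\dpair{\vl\vx}{\vx}=\sum_{i<j}\gamma_{ij}(x_i-x_j)^2$. Write $\gamma_{ij}=Q+\delta_{ij}$; expanding the two constraints gives $\sum_{i<j}\delta_{ij}=0$ and $\sum_{i<j}\delta_{ij}^2=\abs{E}(P-Q^2)$. Splitting off the mean,
\[
\dpair{\vl\vx}{\vx}=Q\sum_{i<j}(x_i-x_j)^2+\sum_{i<j}\delta_{ij}(x_i-x_j)^2,
\]
and decomposing $\vx=u\ind_N+\vy$ with $\vy\perp\ind_N$ (so $x_i-x_j=y_i-y_j$), the first term equals $QN\norm{\vy}^2\ge 0$ since $\sum_{i<j}(y_i-y_j)^2=N\norm{\vy}^2$. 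The second term I would control by Cauchy--Schwarz,
\[
\Big|\sum_{i<j}\delta_{ij}(y_i-y_j)^2\Big|\le\Big(\sum_{i<j}\delta_{ij}^2\Big)^{1/2}\Big(\sum_{i<j}(y_i-y_j)^4\Big)^{1/2},
\]
followed by the pointwise bound $(y_i-y_j)^2\le 2(y_i^2+y_j^2)\le 2\norm{\vy}^2$, which yields $\sum_{i<j}(y_i-y_j)^4\le 2\norm{\vy}^2\sum_{i<j}(y_i-y_j)^2=2N\norm{\vy}^4$.

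Combining these, and using $\abs{E}=N(N-1)/2$ so that $2N\abs{E}=N^2(N-1)$, one gets
\[
\dpair{\vl\vx}{\vx}\ge\Big(QN-\sqrt{2N\abs{E}(P-Q^2)}\Big)\norm{\vy}^2=N\Big(Q-\sqrt{(N-1)(P-Q^2)}\Big)\norm{\vy}^2 .
\]
Since $Q>0$, the hypothesis $(P-Q^2)/Q^2<1/(N-1)$ is exactly what forces the bracketed factor to be positive, so $\dpair{\vl\vx}{\vx}\ge 0$ with equality only when $\vy=0$; hence $\vl\succeq 0$ with kernel precisely $\mathrm{span}(\ind_N)$, and the consensus state is a global minimizer as noted above. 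The place that needs care is keeping the elementary estimate on $\sum_{i<j}(y_i-y_j)^4$ tight enough that the powers of $N$ close up to the stated threshold $1/(N-1)$ rather than to something weaker; the bound $(y_i-y_j)^2\le 2\norm{\vy}^2$ is sharp enough for this. I expect the threshold is only sufficient and not sharp --- the bipartite critical points constructed in the previous section generally bifurcate at smaller values of $(P-Q^2)/Q^2$, in particular for partitions with one very small block --- so I would not try to optimize the constant.
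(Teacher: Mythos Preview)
Your argument is correct and yields exactly the same quantitative lower bound as the paper,
\[
\dpair{\vl\vx}{\vx}\ge N\bigl(Q-\sqrt{(N-1)(P-Q^2)}\bigr)\norm{\vy}^2,
\]
so the threshold $(P-Q^2)/Q^2<1/(N-1)$ is recovered on the nose. The route, however, is a bit different from the paper's. The paper also splits $\gamma_{ij}=Q+\tilde\gamma_{ij}$ but then works at the operator level: it writes $\vl=\vl_0+\tilde\vl$, observes that $\vl_0$ commutes with every graph Laplacian (so they are simultaneously diagonalizable), and bounds the most negative eigenvalue of $\tilde\vl$ via the Hilbert--Schmidt inequality $\sigma_{\min}(\tilde\vl)\ge-\norm{\tilde\vl}_{HS}$, estimating the diagonal contribution to $\norm{\tilde\vl}_{HS}^2$ by Cauchy--Schwarz. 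You instead stay with the quadratic form, apply Cauchy--Schwarz to the edge sum $\sum_{i<j}\delta_{ij}(y_i-y_j)^2$, and control $\sum_{i<j}(y_i-y_j)^4$ by the pointwise bound $(y_i-y_j)^2\le 2\norm{\vy}^2$. Your version is a touch more elementary---it sidesteps the commutativity remark and the Frobenius-norm detour---while the paper's version makes the spectral-gap statement $\sigma_{\min}(\vl)\ge N(Q-\sqrt{(N-1)(P-Q^2)})$ more explicit as a matrix inequality. Both arguments are equally sharp and lead to the identical sufficient condition.
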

\begin{proof}
For the proof we will verify that, under the stated assumption, on $P,Q$, $\vl$ is positive semi-definite with
a $1$-dimensional kernel. We write each opinion as a mean plus a mean-zero part,
\[
\gamma_{ij} = Q + \tilde \gamma_{ij}
\]
where mean-zero part $\tilde\gamma_{ij}$ now satisfies 
\begin{align}
\frac{1}{E}\sum_{i<j}\tilde \gamma_{ij}&=0 \\
\frac{1}{E}\sum_{i<j}\tilde \gamma_{ij}^2 &=P-Q^2. 
\end{align}
The corresponding graph Laplacian takes the form 
\begin{align}
{\vl} & = {\vl}_0 + \tilde {\vl} \\
& = Q\left(\begin{array}{cccc}(N-1)  & -1 & -1 & \ldots \\
-1 & (N-1)  & - 1 & \ldots \\ 
-1 & -1 & (N-1)  & \ldots \\
\vdots & \vdots & \vdots & \ddots 
\end{array}\right) \\ \nonumber &+ \left(\begin{array}{cccc}\sum\limits_{i\neq 1}\tilde\gamma_{i1}& -\tilde\gamma_{12} & -\tilde\gamma_{13}& \ldots \\ -\tilde\gamma_{12} & \sum\limits_{i\neq 2}\tilde\gamma_{12} & -\tilde\gamma_{23} & \ldots \\
-\tilde\gamma_{13} & -\tilde\gamma_{23} & \sum\limits_{i\neq 3}\tilde \gamma_{i3} & \ldots \\
 \vdots & \vdots & \vdots & \ddots 
\end{array}
\right)
\end{align}
The important observation is that the matrix ${\vl}_0$ commutes with 
every graph Laplacian, and thus they can be simultaneously diagonalized, 
and we need only estimate the  most negative eigenvalue of $\tilde {\vl}$. 
The latter is easily estimated in terms of the Hilbert-Schmidt inequality. We 
have 
\[
\sigma_{min}(\tilde {\vl}) \geq - \paren{\sum_i \paren{\sigma_i(\tilde {\vl})}^2}^{\frac12}=-\norm{\tilde {\vl}}_{HS} 
\]
and 
\begin{align*}
\norm{\tilde {\vl}}_{HS}^2 &= \sum_{i,j}\tilde \gamma_{ij}^2 + \sum_i \paren{\sum_{j\neq i}\tilde \gamma_{ij}}^2 \\
&\leq  \sum_{i,j}\tilde \gamma_{ij}^2 + \sum_i \paren{\sum_{j\neq i}\tilde \gamma_{ij}^2}(N-1) \\
&\leq 2 E (P-Q^2) + 2(N-1)E(P-Q^2)\\
& \leq 2 N E (P-Q^2) 
\end{align*}
This gives the inequality for ${\vl}$ that the minimum eigenvalue, {\em other than the zero eigenvalue of course}, satisfies the estimate
\[
\sigma_{min}({\vl}) \geq N\paren{Q - \sqrt{(N-1) (P-Q^2)}} >0.
\]
Thus once again we have a spectral gap and the proof of Theorem \ref{thm:stability_cons_I} can be repeated almost verbatim.
\end{proof}

There is a nice geometric interpretation of this result. As we have already observed, the constraint $g_1=Q$ defines a hyperplane in $\R^{\abs{E}}$ while
the constraint $g_2=P$ defines a sphere of radius $\sqrt{P}$ in $\R^{\abs{E}}$. If $P=Q^2$, then they intersect tangentially at the
single point $\gamma^{*}_{ij}=Q$, i.e. $\vg^{*}=Q(1,1,\ldots,1)^t$. Note that $\vl(\gamma^{*}_{ij})=\vl_0$ and that the spectrum of $\vl_0$ consists of
$0$ which is simple and $NQ$ of multiplicity $N-1$. This is clearly positive semi-definite with a one dimensional kernel as long as $Q>0$. 
What we have really shown is that for $P-Q^2\sim$ small then $\vl$ still satisfies this property as well.

\begin{thm}
Let $\epsilon>0$ be arbitrary but fixed. Suppose that $\sigma_{min}(\vm(0)) >0$ i.e., $\vl(0)$ is positive semi-definite with a 1 dimensional kernel.
Then there is a positive $\delta = \delta(\epsilon,\sigma_{min}(\vm(0)),R,N,P,Q)$ such that $\theta=0$ is attracting in the neighborhood $N_{\delta}(0)$.
In other words, if $\vl(0)$ satisfies the stated hypothesis then the consensus state is locally stable.
\end{thm}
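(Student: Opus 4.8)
The plan is a bootstrap (continuity) argument resting on the same mechanism as the $\epsilon=0$ case: relaxation to consensus can fail only if the spectral gap of the reduced Laplacian $\vm(t)=\vl(\vg(t))\big|_{\ind_N^\perp}$ collapses, and near the consensus state the opinions move so slowly that this cannot happen. (One may assume $P>Q^2$ throughout, since $P=Q^2$ forces $\gamma_{ij}\equiv Q$, hence $\vl\equiv\vl_0$, so that Theorem~\ref{thm:stability_cons_I} applies directly.) Writing $\vx=u\ind_N+\vy$ with $\vy\cdot\ind_N=0$ and $v=\norm{\vy}$ — so that $v=0$ is the consensus state, and controlling $v$ (together with the sign of $u$) is equivalent to working in a small ball about $\sqrt{R/N}\ind_N$ — and observing that \eqref{eqn:posn} and the formula \eqref{eqn:lambda} for $\lambda$ never involve $\epsilon$, the computation preceding Theorem~\ref{thm:stability_cons_I} goes through verbatim with the now time-dependent Laplacian:
\[
\dot v=-2v\Bigl(1-\tfrac{v^2}{R}\Bigr)\frac{\dpair{\vl(t)\vy}{\vy}}{\norm{\vy}^2}\ \le\ -2v\Bigl(1-\tfrac{v^2}{R}\Bigr)\,\sigma_{min}(\vm(t)),
\]
using $\vy\perp\ind_N$ and $\ind_N\in\ker\vl(t)$. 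Hence, as long as $v<\sqrt R$ and $\vm(t)$ retains a positive gap, $v$ decays exponentially; and since $\dot u=2\lambda u$ preserves the sign of $u$ while $Nu^2=R-v^2$, this gives $\vx(t)\to\sqrt{R/N}\ind_N$ for initial data near (hence with $u(0)>0$) the consensus state.

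The second ingredient is that the opinions are nearly frozen near consensus. From $x_i-x_j=y_i-y_j$ one has the exact identity $\sum_{i<j}(x_i-x_j)^2=Nv^2$, and the constraint $g_2=P$ gives the a priori bound $\abs{\gamma_{ij}}\le\sqrt{\abs{E}P}$; substituting these into \eqref{eqn:tau}--\eqref{eqn:mu} (the denominators $P-Q^2$ being harmless) gives $\abs{\tau},\abs{\mu}\le C_1(N,P,Q)\,v^2$, and feeding this back into \eqref{eqn:opin} gives $\abs{\dot\gamma_{ij}}\le\epsilon\,C_2(N,P,Q)\,v^2$. Therefore $\norm{\vl(t)-\vl(0)}_{op}\le C_3\,\epsilon\int_0^t v(s)^2\,ds$, and Weyl's inequality on the fixed subspace $\ind_N^\perp$ yields $\bigl|\sigma_{min}(\vm(t))-\sigma_{min}(\vm(0))\bigr|\le C_3\,\epsilon\int_0^t v(s)^2\,ds$.

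Now I would run the bootstrap. With $g_0=\sigma_{min}(\vm(0))>0$, set $T=\sup\{t:\ v\le\delta\text{ and }\sigma_{min}(\vm)\ge g_0/2\text{ on }[0,t]\}$, which is positive by continuity. On $[0,T]$ the first display gives $\dot v\le-g_0(1-\delta^2/R)v=:-\kappa v$, hence $v(t)\le v(0)e^{-\kappa t}$ and $\int_0^t v^2\le v(0)^2/(2\kappa)$; taking $\delta\le\sqrt{R/2}$ makes $\kappa\ge g_0/2$. The perturbation estimate then forces $\sigma_{min}(\vm(t))\ge g_0-C_3\epsilon\delta^2/(2\kappa)$, which is $>g_0/2$ provided in addition $\delta^2<g_0^2/(2C_3\epsilon)$, while $v(t)\le v(0)<\delta$. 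Both inequalities being strict on $[0,T]$, maximality of $T$ forces $T=\infty$, so $v(t)\le v(0)e^{-\kappa t}\to0$ and $\vx(t)\to\sqrt{R/N}\ind_N$. Taking $\delta=\delta(\epsilon,g_0,R,N,P,Q)$ to be the minimum of $\sqrt{R/2}$, $g_0/\sqrt{2C_3\epsilon}$, and the radius guaranteeing $u(0)>0$ completes the argument.

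The only non-routine point, and hence the main obstacle, is closing the bootstrap: it is essential that the opinion drift be \emph{quadratically} small in $v$ rather than merely $O(\epsilon)$, because it is exactly this quadratic smallness, integrated against the exponential decay of $v$, that keeps the accumulated perturbation $\norm{\vl(t)-\vl(0)}$ below half the initial gap for \emph{all} time, not just on a finite interval. The remaining ingredients — the a priori bounds on $\mu$ and $\tau$, eigenvalue perturbation on a fixed subspace, and the open/closed continuation argument — are standard.
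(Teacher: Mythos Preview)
Your argument is correct and rests on the same mechanism as the paper's proof in the Appendix: the opinion drift $\dot\gamma_{ij}$ is \emph{quadratically} small in the distance from consensus, so the accumulated perturbation of the spectral gap stays below half the initial gap while $v$ decays exponentially. The execution, however, is genuinely different.

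The paper parametrizes by the angle $\theta$ (with $v=\sqrt R\,|\sin\theta|$) and, rather than integrating, differentiates the least eigenvalue directly via $\tfrac{d}{dt}\tilde\sigma=(\dot\vm u,u)$ and a Gershgorin bound to obtain the coupled system of differential inequalities
\[
\dot\theta\le -\tilde\sigma\sin(2\theta),\qquad \Bigl|\tfrac{d}{dt}\tilde\sigma\Bigr|\le \epsilon K\sin^2\theta.
\]
It then closes with a phase-plane argument: the curves $\sigma^2\pm\epsilon K\ln|\sec\theta|=\text{const}$ bound an explicit trapping region, yielding the concrete threshold $\delta=\arccos\bigl(\exp(-\sigma_0^2/(2\epsilon K))\bigr)$. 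Your route instead bounds $\sigma_{\min}(\vm(t))-\sigma_{\min}(\vm(0))$ by Weyl's inequality applied to the integrated drift $\int_0^t\dot\gamma$, and closes with a bootstrap/continuation argument. Your approach is arguably cleaner and sidesteps the (implicit) smoothness hypothesis on the least eigenvalue that the formula $\tfrac{d}{dt}\tilde\sigma=(\dot\vm u,u)$ requires when eigenvalues cross; the paper's approach is more geometric and delivers a single closed-form $\delta$. Either way, the crux you identify---that $O(v^2)$ rather than $O(\epsilon)$ drift is what makes the time-integrated perturbation finite---is exactly the point.
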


\section{Stability of The Bipartite State}
In this section we shall address the issue of stability of the explicitly constructed bipartite states. The approach is to linearize the flow about the bipartite equilibria and to count the number of negative eigenvalues, i.e. the index, of the resulting linear map. 

Recall that the critical points of the flow are precisely the constrained extrema of the Dirichlet energy $D$ i.e. the \emph{critical points}
 of $\mathscr{D}$. The method of Lagrange multipliers and a standard Lyapunov function argument gives the following result
\begin{lem}
Let $\omega_0\in\Omega$ be a critical point of $\mathscr{D}$ i.e. a local extrema of $D$ subject to the constraints \eqref{eqn:gam_sum_cons}, \eqref{eqn:gam_norm_cons}, \eqref{eqn:x_norm_cons}. Let the associated Hessian 
\begin{equation}
H(\omega_0)=-
\at{\begin{pmatrix}
\displaystyle \PDDM{\mathscr{D}}{x_i}{x_k}&\displaystyle \PDDM{\mathscr{D}}{\gamma_{lk}}{x_i}\\
\\
\epsilon\displaystyle \PDDM{\mathscr{D}}{x_k}{\gamma_{ij}}&\epsilon\displaystyle \PDDM{\mathscr{D}}{\gamma_{lk}}{\gamma_{ij}}
\end{pmatrix}}{\omega_0}.
\end{equation}
Let $T_{\omega_0}\Omega$ be the tangent space to $\Omega$ at $\omega_0$. If $\at{H(\omega_0)}{T_{\omega_0}\Omega}$ is negative definite, then the gradient flow is stable near $\omega_0$.
\end{lem}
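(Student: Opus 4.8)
The plan is to recognize this as an instance of the standard principle that a constrained gradient flow $\dot{\omega} = -\nabla_{\Omega}\mathscr{D}$ on a Riemannian manifold has a Lyapunov function, namely $\mathscr{D}$ itself restricted to $\Omega$, and that a nondegenerate local minimum of this restricted function is asymptotically stable. The one wrinkle is that our flow is not the plain gradient flow: the $\gamma$-equations carry the stiffness factor $\epsilon$, so the flow is the gradient flow with respect to the \emph{modified} metric on $\R^N\times\R^{\abs{E}}$ in which the $x$-block has weight $1$ and the $\gamma$-block has weight $1/\epsilon$. Since $\epsilon>0$, this metric is positive definite and equivalent to the Euclidean one, so nothing essential changes. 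Concretely, I would first write the ambient flow as $\dot{\vx} = -\nabla_{\vx}\mathscr{D}$, $\dot{\vg} = -\epsilon\,\nabla_{\vg}\mathscr{D}$, with the Lagrange multipliers $\lambda,\mu,\tau$ chosen (as in Section 2) so that $g_1,g_2,g_3$ are conserved; thus the trajectory through any point of $\Omega$ stays on $\Omega$, and $\omega_0$ being a critical point of $\mathscr{D}$ means exactly that $\nabla_{\vx}\mathscr{D}(\omega_0)=0$ and $\nabla_{\vg}\mathscr{D}(\omega_0)=0$, i.e. $\omega_0$ is a genuine equilibrium of the flow.

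Next I would compute $\frac{d}{dt}\mathscr{D}(\omega(t))$ along a trajectory lying in $\Omega$. Because the constraint functions are conserved, the three terms $-\mu g_1 - \tau g_2 - \lambda g_3$ contribute only through the (bounded) time-variation of the multipliers times constants, and in fact one checks that along $\Omega$ one simply has $\frac{d}{dt}\mathscr{D} = \frac{d}{dt}\mathcal{D} = \dpair{\nabla_{\vx}\mathcal{D}}{\dot{\vx}} + \dpair{\nabla_{\vg}\mathcal{D}}{\dot{\vg}}$; and on $\Omega$ the tangential parts of $\nabla\mathcal{D}$ and $\nabla\mathscr{D}$ agree, so this equals $-\norm{\nabla_{\vx}\mathscr{D}}^2 - \epsilon^{-1}\!\cdot\!\epsilon\norm{\nabla_{\vg}\mathscr{D}}^2 \le 0$, with the appropriate projections onto $T_{\omega}\Omega$. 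Hence $\mathscr{D}\big|_{\Omega}$ is nonincreasing along the flow and strictly decreasing away from equilibria. Then I would Taylor-expand $\mathscr{D}$ on $\Omega$ around $\omega_0$: since $\omega_0$ is a critical point, the first-order term vanishes and $\mathscr{D}(\omega) - \mathscr{D}(\omega_0) = \tfrac12\dpair{\at{(-H(\omega_0))}{T_{\omega_0}\Omega}\,\eta}{\eta} + o(\norm{\eta}^2)$ for $\eta\in T_{\omega_0}\Omega$ a local coordinate — note the sign: the statement defines $H$ with an overall minus sign, so $\at{H(\omega_0)}{T_{\omega_0}\Omega}$ negative definite is the same as the Hessian of $\mathscr{D}\big|_\Omega$ being positive definite, i.e. $\omega_0$ is a strict local minimum of $\mathscr{D}\big|_\Omega$. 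The factor $\epsilon$ inside the lower blocks of $H$ is exactly the factor coming from the modified metric; one should either absorb it into the metric throughout or keep track of it consistently, and I would remark on this so the sign/scaling bookkeeping is transparent.

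With these two ingredients the conclusion is the textbook Lyapunov argument. Set $V(\omega) = \mathscr{D}(\omega) - \mathscr{D}(\omega_0)$ on a small neighborhood $U\subset\Omega$ of $\omega_0$; by the Hessian computation, shrinking $U$ if necessary, $V>0$ on $U\setminus\{\omega_0\}$ and $V(\omega_0)=0$, and by the dissipation computation $\dot V \le 0$ on $U$, with $\dot V<0$ except at equilibria. Standard Lyapunov stability then gives that $\omega_0$ is stable, and since the only equilibrium in a small enough $U$ is $\omega_0$ itself (a nondegenerate critical point is isolated), LaSalle's invariance principle upgrades this to asymptotic stability: every trajectory starting in a sufficiently small sublevel set $\{V<c\}\subset U$ converges to $\omega_0$. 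I expect the main technical obstacle to be not any single hard estimate but the careful handling of the constrained geometry — verifying that the $\epsilon$-weighted gradient flow is tangent to $\Omega$ (which requires precisely the multiplier formulas \eqref{eqn:tau}, \eqref{eqn:mu}, \eqref{eqn:lambda}), and that the restriction of $H$ to $T_{\omega_0}\Omega$ is the right object to look at rather than the full ambient Hessian (the bordered-Hessian / Lagrange-multiplier second-order condition). Once the flow is correctly identified as a gradient flow for an equivalent metric on the compact manifold $\Omega$, the rest is routine.
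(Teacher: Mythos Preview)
Your proposal is correct and is exactly the argument the paper has in mind: the paper does not give a detailed proof of this lemma but simply prefaces it with ``The method of Lagrange multipliers and a standard Lyapunov function argument gives the following result,'' and what you have written is a careful unpacking of precisely that standard argument. Your observation about the $\epsilon$-weighted metric and your bookkeeping of the sign convention on $H$ are the right points to make; the only cosmetic issue is that the cleanest Lyapunov function is $\mathcal{D}|_\Omega$ rather than $\mathscr{D}$ (since the multipliers vary in time), but you effectively note this yourself.
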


Thus we need only determine the number of negative eigenvalues of the Hessian, $H$, restricted to $T\Omega$. To facilitate the computation we need the following

\begin{lem}
Let $A$ be a symmetric invertible matrix in $\R^n$. Given any linear subspace $S\subset\R^n$ then \[n_{\pm}(A) = n_{\pm}(\at{A}{S})+ n_{\pm}(\at{A^{-1}}{S^{\perp}}).\]
\end{lem}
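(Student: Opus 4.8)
The plan is to exploit the block structure induced by the splitting $\R^n = S \oplus S^{\perp}$ and to compare the quadratic form $\dpair{Ax}{x}$ with the quadratic form $\dpair{A^{-1}y}{y}$ via a suitable change of coordinates. Write vectors as $x = x_S + x_{S^\perp}$ with $x_S \in S$, $x_{S^\perp}\in S^\perp$, and decompose $A$ into blocks $A_{SS}, A_{S S^\perp}, A_{S^\perp S}, A_{S^\perp S^\perp}$ relative to this orthogonal decomposition. The key algebraic fact I would use is the relation between the Schur complement of a block of $A$ and a corresponding block of $A^{-1}$: if $A$ is invertible, then the $S^\perp$–$S^\perp$ block of $A^{-1}$ is $(A_{S^\perp S^\perp} - A_{S^\perp S} A_{SS}^{-1} A_{S S^\perp})^{-1}$, the Schur complement of $A_{SS}$ in $A$ (this requires $A_{SS} = \at{A}{S}$ to be invertible, which I address below). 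Call this Schur complement $C$; then $\at{A^{-1}}{S^\perp}$ in the sense of the restricted quadratic form is $C^{-1}$.

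The heart of the argument is Haynsworth's inertia additivity formula: for a symmetric invertible $A$ with invertible block $A_{SS}$, one has $n_{\pm}(A) = n_{\pm}(A_{SS}) + n_{\pm}(C)$, where $C$ is the Schur complement of $A_{SS}$. I would either cite this or prove it in one line by exhibiting the congruence
\[
\begin{pmatrix} I & 0 \\ -A_{S^\perp S}A_{SS}^{-1} & I \end{pmatrix}
\begin{pmatrix} A_{SS} & A_{S S^\perp} \\ A_{S^\perp S} & A_{S^\perp S^\perp}\end{pmatrix}
\begin{pmatrix} I & -A_{SS}^{-1}A_{S S^\perp} \\ 0 & I \end{pmatrix}
= \begin{pmatrix} A_{SS} & 0 \\ 0 & C \end{pmatrix},
\]
and then invoking Sylvester's law of inertia. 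Since $A_{SS}$ is exactly $\at{A}{S}$ and $C^{-1}$ is exactly $\at{A^{-1}}{S^\perp}$, and since inertia of an invertible matrix equals the inertia of its inverse ($n_\pm(C) = n_\pm(C^{-1})$), this yields $n_\pm(A) = n_\pm(\at{A}{S}) + n_\pm(\at{A^{-1}}{S^\perp})$, which is the claim.

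The main obstacle is the hypothesis gap: the identity as stated presumes that $\at{A}{S}$ is invertible, otherwise the Schur complement is undefined and the statement is literally false (e.g. take $A = \bigl(\begin{smallmatrix}0&1\\1&0\end{smallmatrix}\bigr)$ and $S$ the first coordinate axis). So I expect the real work is either (a) to note that the paper only applies this lemma in a situation where $\at{A}{S}$ is automatically nonsingular — which is plausible since $S$ will be a tangent space on which the Hessian is expected to be definite — and to add that as a standing hypothesis, or (b) to restrict to the generic case and remark that the non-generic case is handled by a perturbation/continuity argument (perturb $A$ to $A + t P_S$ for small $t$, apply the identity, and note both sides are locally constant in $t$ away from a discrete set since $A$ is invertible). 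I would flag this explicitly rather than sweep it under the rug, since the downstream index computation for the bipartite state depends on getting the bookkeeping exactly right. Modulo this caveat, the proof is a short congruence argument plus Sylvester's law of inertia.
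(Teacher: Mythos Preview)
The paper states this lemma without proof, so there is no argument of the authors to compare against. Your approach via Haynsworth's inertia additivity (block-triangular congruence plus Sylvester's law, together with the identification of the $S^\perp$--block of $A^{-1}$ with the inverse of the Schur complement of $A_{SS}$) is the standard and correct one, and indeed the paper invokes exactly this Schur formula separately a few lines later, also without proof.

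Your flag about the missing hypothesis is well taken and is a genuine gap in the lemma as stated: your $2\times 2$ counterexample shows the identity fails when $\at{A}{S}$ is singular. In the paper's application the roles are $A=H^{-1}$ and $S=(T_{\omega_0}\Omega)^\perp$, so what is needed is invertibility of $\at{H^{-1}}{S^\perp}$; the authors compute this restriction explicitly as (minus) the Jacobian $\nabla_{(\mu,\tau,\lambda)}(g_1,g_2,g_3)$ and verify its principal minors are nonzero when $\tau\neq 0$, so the hypothesis is satisfied in their setting. Your suggestion to either add the nonsingularity of $\at{A}{S}$ as a standing assumption or to pass to the generic case by perturbation is exactly right; the former suffices for everything the paper does with the lemma.
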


\subsection{Index of the Full Hessian}
We first put coordinates on $\R^{|E|}$ by ordering the pairs $(i,j)$ with $i<j$ lexicographically. That is we write
\[\vg = (\gamma_{12},\gamma_{13},\ldots,\gamma_{23},\gamma_{24},\ldots,\gamma_{34},\ldots).\]
A direct computation shows that
\begin{align*}
\partial^2_{\vx}{\mathscr{D}} = 2(\vl(\vg)-\lambda)\qquad
\text{and}\qquad
 \partial^2_{\vg}{\mathscr{D}} =-\frac{2\tau}{|E|}I_{|E|\times|E|}.
 \end{align*}
 Furthermore
 \begin{align*}
  \PDDM{\mathscr{D}}{\gamma_{lk}}{x_i}=2(x_i-x_j)\delta_{ij,lk},\qquad\text{and}\qquad
 \PDDM{\mathscr{D}}{x_k}{\gamma_{ij}}=2(x_i-x_j)(\delta_{ik}-\delta_{jk}).
 \end{align*}
Recall that the bipartite equilibrium is given by 
\[
\vx=\sqrt{\frac{R}{\abs{J}\abs{I}N}}\begin{pmatrix}
 \abs{J}\ind_{\abs{I}}\\
 -\abs{I}\ind_{\abs{J}}
 \end{pmatrix}.
 \] 
 Defining the $N\times|E|$ matrix $B$ with entries

 \begin{equation}
 B_{i,lk}=2\sqrt{\frac{RN}{\abs{J}\abs{I}}}
 \begin{cases}
 1&i\in\{l,k\};\quad i\in I;\quad\{l,k\}\setminus\{i\}\in J\\
 -1&i\in\{l,k\};\quad i\in J;\quad\{l,k\}\setminus\{i\}\in I,
 \end{cases}
 \end{equation}
 we see that $\PDDM{\mathscr{D}}{\gamma_{lk}}{x_i}=B_{i,lk}$. To summarize, we have shown that
  \begin{equation*}
 H=
 \begin{pmatrix}
 -2(\vl(\vg)-\lambda)&-B\\
 -\epsilon B^t&\dfrac{2\epsilon\tau}{|E|}I_{|E|\times|E|}
 \end{pmatrix}.
 \end{equation*}
 
 The next result will prove useful in simplifying the computations:
 \begin{lem}[Schur Formula]
 Suppose that $M$ is a symmetric matrix of the form
 \begin{equation*}
 M=
 \begin{pmatrix}
 A&B\\
 B^t&C
 \end{pmatrix}
 \end{equation*}
 where $A$, $B$ and $C$ are $m\times m$, $m\times k$ and $k\times k$ matrices respectively and $C$ is invertible. Then
 \[n_{\pm}(M) = n_{\pm}(C) +n_{\pm}(A-BC^{-1}B^t).\]
 \end{lem}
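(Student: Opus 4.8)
The plan is to exhibit an explicit congruence that block-diagonalizes $M$ and then invoke Sylvester's law of inertia. Since $M$ is symmetric, $A$ and $C$ are symmetric, and $C$ is assumed invertible, so $C^{-1}$ is symmetric as well and the Schur complement $S := A - B C^{-1} B^t$ is a bona fide symmetric $m\times m$ matrix. First I would introduce the invertible block matrix
\[
P = \begin{pmatrix} I_m & -B C^{-1} \\ 0 & I_k \end{pmatrix}, \qquad P^t = \begin{pmatrix} I_m & 0 \\ -C^{-1} B^t & I_k \end{pmatrix},
\]
and compute the two block products. The product $PM$ has top-right block $B - B C^{-1} C = 0$, so $PM = \begin{pmatrix} S & 0 \\ B^t & C \end{pmatrix}$; multiplying on the right by $P^t$ then kills the bottom-left block, $B^t - C C^{-1} B^t = 0$, leaving
\[
P M P^t = \begin{pmatrix} A - B C^{-1} B^t & 0 \\ 0 & C \end{pmatrix}.
\]

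Next I would observe that $P$ is invertible, so $M$ and $P M P^t$ are congruent symmetric matrices; by Sylvester's law of inertia they have the same inertia (the same number of positive, negative and zero eigenvalues, counted with multiplicity). Since the inertia of a block-diagonal symmetric matrix is the sum of the inertias of its diagonal blocks, this gives $n_{\pm}(M) = n_{\pm}(A - B C^{-1} B^t) + n_{\pm}(C)$, which is exactly the asserted identity. (One also recovers the matching statement for the number of zero eigenvalues, though only $n_{\pm}$ is needed later.)

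The computation is entirely routine, so there is no serious obstacle; the only points deserving a word of care are (i) checking that the Schur complement is symmetric, so that ``number of negative eigenvalues'' makes sense for it, and (ii) noticing that the hypothesis genuinely used is the invertibility of $C$ — not merely of $M$ — since otherwise $B C^{-1} B^t$ is undefined. An alternative to the congruence argument would be a continuity/homotopy argument, deforming the off-diagonal blocks $B \mapsto tB$ for $t\in[0,1]$ and tracking that no eigenvalue crosses $0$; but the explicit congruence above is cleaner, and that is the route I would take.
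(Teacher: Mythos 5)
Your proof is correct: the congruence $PMP^t = \operatorname{diag}(A - BC^{-1}B^t,\, C)$ with the unitriangular block matrix $P$ is computed correctly, and Sylvester's law of inertia then gives exactly the claimed additivity of $n_{\pm}$. The paper states this lemma without proof (it is the standard Haynsworth inertia additivity formula), and the congruence argument you give is the canonical route, so there is nothing to reconcile.
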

 
As a consequence we have
\[n_{-}(H) = n_{-}(\frac{2\epsilon\tau}{|E|}I_{|E|\times|E|})+n_{-}( -2(\vl(\vg)-\lambda)-\frac{|E|}{2\tau}BB^t).\] Since 
\begin{equation*}
n_{-}\paren{\frac{2\epsilon\tau}{|E|}I_{|E|\times|E|}} =
\begin{cases}
 0& \quad \tau\geq 0\\
 |E|& \quad \tau< 0
 \end{cases}
 \end{equation*}
  it follows that
 \begin{prop}
 $\tau<0$ is a necessary condition for the bipartite state to be stable.
 \end{prop}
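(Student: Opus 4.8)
The plan is to read the statement directly off the two index identities displayed just above, together with the Lyapunov lemma and the subspace-index lemma of this section. First I would invoke the Lyapunov lemma: the bipartite equilibrium $\omega_0$ is stable when $\at{H(\omega_0)}{T_{\omega_0}\Omega}$ is negative definite; and since the flow $\dot x_i=-\partial\mathscr{D}/\partial x_i$, $\dot\gamma_{ij}=-\epsilon\,\partial\mathscr{D}/\partial\gamma_{ij}$ is exactly the gradient flow of $\mathscr{D}$ for the inner product that weights the $\vg$-coordinates by $\epsilon^{-1}$, a critical point that fails to be a local minimum of $\mathscr{D}|_{\Omega}$ cannot be Lyapunov stable (LaSalle's invariance principle applied to $\mathscr{D}$). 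Hence at a stable bipartite state $n_{-}\!\big(\at{H}{T_{\omega_0}\Omega}\big)=\dim T_{\omega_0}\Omega=N+\abs{E}-3$.

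Next I would apply the subspace-index lemma with $S=T_{\omega_0}\Omega$, whose orthogonal complement is $3$-dimensional; this is legitimate because a genuine bipartite state needs $P>Q^{2}$, whence $\nu\neq 0$, $\tau=\mp RN/(2\nu r k(r))\neq 0$, and $H$ is invertible. The lemma gives $n_{-}(H)=n_{-}\!\big(\at{H}{T_{\omega_0}\Omega}\big)+n_{-}\!\big(\at{H^{-1}}{(T_{\omega_0}\Omega)^{\perp}}\big)\geq N+\abs{E}-3$ at a stable state. On the other hand, if $\tau>0$ then $n_{-}\!\big(\tfrac{2\epsilon\tau}{\abs{E}}I_{\abs{E}\times\abs{E}}\big)=0$, so the Schur-complement identity above collapses to $n_{-}(H)=n_{-}\!\big(-2(\vl(\vg)-\lambda)-\tfrac{\abs{E}}{2\tau}BB^{t}\big)\leq N$, the size of that block. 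Combining the two bounds forces $N+\abs{E}-3\leq N$, i.e. ${N\choose 2}\leq 3$, which is impossible for $N\geq 4$; for $N=2$ the bipartite construction is degenerate ($r=1$), and $N=3$ is covered by the direct argument below. Therefore $\tau\leq 0$ at a stable bipartite state, and since $\tau\neq 0$ there, $\tau<0$.

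For a single argument valid at every $N$ I would instead proceed directly. Since $\nabla g_{3}$ lies entirely in the $\vx$-block while $\nabla g_{1}$ and $\nabla g_{2}$ lie entirely in the $\vg$-block, every vector of the form $(\vec 0,w)$ with $\sum_{i<j}w_{ij}=0$ and $\sum_{i<j}\gamma_{ij}w_{ij}=0$ is tangent to $\Omega$ at $\omega_0$, and such nonzero $w$ exist as soon as $\abs{E}\geq 3$. For any such vector $\dpair{H(\vec 0,w)}{(\vec 0,w)}=\tfrac{2\epsilon\tau}{\abs{E}}\norm{w}^{2}$, which is nonnegative, and strictly positive when $\tau>0$; hence if $\tau\geq 0$ then $\at{H}{T_{\omega_0}\Omega}$ is not negative definite, and if $\tau>0$ the curve in $\Omega$ tangent to $(\vec 0,w)$ is a genuine unstable direction, along which $\mathscr{D}$ decreases to second order, so $\omega_0$ is not a local minimizer and hence not Lyapunov stable. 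I do not expect a hard step anywhere: all the substance was spent on the Schur reduction of $H$. The only points deserving a sentence of care are the invertibility hypothesis needed to apply the subspace-index lemma and the implication ``$\at{H}{T_{\omega_0}\Omega}$ not negative definite $\Rightarrow$ genuine instability,'' both of which the direct argument settles cleanly.
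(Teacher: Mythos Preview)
Your proposal is correct and, in its first route, is precisely the argument the paper leaves implicit: from the Schur decomposition one gets $n_{-}(H)\le N$ when $\tau>0$, whereas stability forces $n_{-}(\at{H}{T_{\omega_0}\Omega})=\dim T_{\omega_0}\Omega=N+\abs{E}-3$, and the subspace--index lemma (which bounds $n_{-}(\at{H}{S})\le n_{-}(H)$) yields the contradiction for $\abs{E}>3$. You have also been more careful than the paper by flagging the invertibility hypothesis on $H$ and by making explicit that for a gradient flow a non-minimizing critical point cannot be Lyapunov stable; the paper's Lyapunov lemma only states the sufficient direction.

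Your second, direct argument is a genuinely different and more elementary route that the paper does not give. By observing that $\nabla g_3$ lives in the $\vx$-block while $\nabla g_1,\nabla g_2$ live in the $\vg$-block, you locate an $(\abs{E}-2)$-dimensional family of tangent vectors $(\vec 0,w)$ on which the Hessian quadratic form equals $\tfrac{2\epsilon\tau}{\abs{E}}\norm{w}^2$; this reads off the sign of $\tau$ without appealing to the Schur complement or the subspace--index lemma at all, and it needs no case analysis in $N$ and no invertibility of $H$. The paper's approach has the advantage that the Schur computation is being carried out anyway (it feeds into the full index count that follows), so the proposition drops out as a free by-product; your direct argument has the advantage of being self-contained and of covering the small-$N$ and possibly-degenerate cases uniformly.
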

 
 Observe that 
 \begin{equation*}
 BB^t=\frac{4RN}{|I||J|}
 \begin{pmatrix}
       |J| &         0  &  \ldots &         0 &        -1 &       -1 & \ldots & -1\\
         0 &        |J| &  \ldots &         0 &       - 1 &       -1 & \ldots & -1\\
 \vdots & \vdots & \ddots &\vdots & \vdots & \vdots & \vdots & \vdots\\
          0 &         0 &   \ldots&       |J| &        -1 &        -1 &  \ldots & -1\\
 -1&-1&\ldots&-1&|I|&0&\ldots&0\\
 -1&-1&\ldots&-1&0&|I|&\ldots&0\\
\vdots&\vdots&\vdots&\vdots&\vdots&\vdots&\ddots&\vdots\\
-1&-1&\ldots&-1& 0&0&\ldots&|I|
 \end{pmatrix}
 \end{equation*}
 and recall that the eigenbasis of $\vl(\vg)$ at the critical point is spanned by the vectors
  \begin{align*}\left\{
  \ind_N,\quad
  \begin{pmatrix}
 \abs{J}\ind_{\abs{I}}\\
 -\abs{I}\ind_{\abs{J}}\\
 \end{pmatrix},\quad
 \begin{pmatrix}
\vec{q}\\
0\\
 \end{pmatrix},\quad
 \begin{pmatrix}
0\\
\vec{p}\\
 \end{pmatrix}:\quad\sum_{i=1}^{|I|}q_i=\sum_{j=1}^{|J|}p_j=0\right\}.
  \end{align*}
Now, a direct computation shows
\begin{align*}
BB^t  \ind_N &=0\\
BB^t  \begin{pmatrix}
 \abs{J}\ind_{\abs{I}}\\
 -\abs{I}\ind_{\abs{J}}\\
 \end{pmatrix} & =\frac{4RN^2}{|I||J|}\begin{pmatrix}
 \abs{J}\ind_{\abs{I}}\\
 -\abs{I}\ind_{\abs{J}}\\
 \end{pmatrix} \\
 BB^t \begin{pmatrix}
\vec{q}\\
0\\
 \end{pmatrix}&=\frac{4RN}{|I|}\begin{pmatrix}
\vec{q}\\
0\\
 \end{pmatrix} \\
 BB^t \begin{pmatrix}
0\\
\vec{p}\\
 \end{pmatrix}&=\frac{4RN}{|J|} \begin{pmatrix}
0\\
\vec{p}\\
 \end{pmatrix};
\end{align*}
meaning we have verified 
   \begin{prop}
 The matrix $BB^t$ and $\vl(\vg)$ have the same eigen-basis. In particular the spectrum of $BB^t$ is given by
  \begin{equation}
  \sigma(BB^t)=
 \begin{cases} 
 0,&\text{\,\, of multiplicity}=1\\
\dfrac{4RN^2}{|I||J|},&\text{\,\, of multiplicity}=1\\
 \dfrac{4RN}{|I|},&\text{\,\, of multiplicity}=\abs{I}-1\\
  \dfrac{4RN}{|J|},&\text{\,\, of multiplicity}=\abs{J}-1.\\
 \end{cases}
 \end{equation}
 \end{prop}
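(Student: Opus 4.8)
The plan is to verify the four eigenvalue-eigenvector relations by direct matrix-vector multiplication, exploiting the block structure of $BB^t$. The matrix $BB^t$ is the $N\times N$ matrix displayed above, scaled by $\frac{4RN}{|I||J|}$; write it as $\frac{4RN}{|I||J|}\,G$ where $G$ has the block form $G=\begin{pmatrix} |J| I_{|I|} & -\ind_{|I|}\ind_{|J|}^t\\ -\ind_{|J|}\ind_{|I|}^t & |I| I_{|J|}\end{pmatrix}$. Once this identification is made, every claimed identity reduces to computing $G$ applied to a block vector, using only the elementary facts $\ind_{|I|}^t\ind_{|I|}=|I|$, $\ind_{|J|}^t\ind_{|J|}=|J|$, $\ind_{|I|}^t\vec q=0$, and $\ind_{|J|}^t\vec p=0$, together with $N=|I|+|J|$.

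First I would compute $G\ind_N$, i.e.\ $G$ applied to $\begin{pmatrix}\ind_{|I|}\\ \ind_{|J|}\end{pmatrix}$: the top block gives $|J|\ind_{|I|}-|J|\ind_{|I|}=0$ and the bottom block gives $-|I|\ind_{|J|}+|I|\ind_{|J|}=0$, so $BB^t\ind_N=0$. Second I would compute $G$ applied to $\begin{pmatrix} |J|\ind_{|I|}\\ -|I|\ind_{|J|}\end{pmatrix}$: the top block yields $|J|^2\ind_{|I|}+|I||J|\ind_{|I|}=|J|N\ind_{|I|}$ and the bottom block yields $-|J||I|\ind_{|J|}-|I|^2\ind_{|J|}=-|I|N\ind_{|J|}$, so that $GX=NX$ for this vector, giving the eigenvalue $\frac{4RN}{|I||J|}\cdot N=\frac{4RN^2}{|I||J|}$. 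Third, for $\begin{pmatrix}\vec q\\ 0\end{pmatrix}$ with $\ind_{|I|}^t\vec q=0$, the top block gives $|J|\vec q-\ind_{|I|}(\ind_{|J|}^t\cdot 0)=|J|\vec q$ and the bottom block gives $-\ind_{|J|}(\ind_{|I|}^t\vec q)=0$, so the eigenvalue is $\frac{4RN}{|I||J|}\cdot |J|=\frac{4RN}{|I|}$; the symmetric computation with $\begin{pmatrix}0\\ \vec p\end{pmatrix}$ and $\ind_{|J|}^t\vec p=0$ gives $\frac{4RN}{|J|}$.

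Finally I would note that these four eigenspaces — the span of $\ind_N$, the span of the bipartite eigenvector, the $(|I|-1)$-dimensional space $\{\vec q : \sum q_i=0\}\oplus 0$, and the $(|J|-1)$-dimensional space $0\oplus\{\vec p:\sum p_j=0\}$ — are mutually orthogonal and have dimensions summing to $1+1+(|I|-1)+(|J|-1)=N$, hence form an orthogonal eigenbasis of the symmetric matrix $BB^t$. Since this is, up to the labelling of eigenvalues, exactly the eigenbasis of $\vl(\vg)$ at the bipartite critical point recalled just above, the two matrices share an eigenbasis and the stated multiplicities follow immediately. There is no real obstacle here: the only thing to be careful about is correctly reading off the entries of $BB^t$ from the definition of $B$ (in particular the factor $\frac{4RN}{|I||J|}$ and the sign pattern) and keeping track of which index set a given block belongs to, but once $G$ is written in block form the whole statement is a two-line verification per eigenvector.
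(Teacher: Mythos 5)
Your proposal is correct and follows essentially the same route as the paper: the paper also verifies the four eigenvalue relations by applying $BB^t$ directly to the vectors $\ind_N$, the bipartite vector, and the mean-zero vectors $(\vec q,0)^t$, $(0,\vec p)^t$, which are already known to form an eigenbasis of $\vl(\vg)$. Your block-matrix organization of the computation and the explicit dimension count $1+1+(|I|-1)+(|J|-1)=N$ are just a cleaner packaging of the same verification.
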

 
 Since $\lambda=N\beta$ for the bipartite state, it follows that the spectrum of $(\vl(\vg_0)-\lambda)+\frac{|E|}{4\tau}BB^t$ is contained in the set
 \[\left\{ -N\beta,\quad \frac{N^2R}{r\tau},\quad\frac{RN}{2r\tau}(2\abs{J}-\abs{I}),\quad\frac{RN}{2r\tau}(2\abs{I}-\abs{J}) \right\},\]
 the latter two being repeated $\abs{I}-1$ and $\abs{J}-1$ times respectively. For $\abs{I}> 2N/3$, we see that $2\abs{J}-\abs{I}<0$ and for 
 $\abs{I}<N/3$, we see that $2\abs{I}-\abs{J}<0$. Putting all this information together we have proven
 
 \begin{prop}
 Assume $\tau<0$. For convenience put \[T=-2(\vl(\vg)-\lambda)-\frac{|E|}{2\tau}BB^t.\] 
 Then
\begin{equation}
 \left.\begin{aligned}
 n_{-}(T )=
 \begin{cases}
 \abs{J}-1,&\abs{I}\in[1,N/3)\\
0,&\abs{I}\in (N/3,2N/3)\\
\abs{I}-1,&\abs{I}\in(2N/3,N-1]\\
 \end{cases}
\end{aligned}\right\}\quad\text{for\,\,}\beta>0;
 \end{equation}
 and
  \begin{equation}
 n_{-}(T)=
  \left.\begin{aligned}
 \begin{cases}
 \abs{J},&\abs{I}\in[1,N/3)\\
1,&\abs{I}\in (N/3,2N/3)\\
\abs{I},&\abs{I}\in(2N/3,N-1]\\
 \end{cases}
\end{aligned}\right\}\quad\text{for\,\,}\beta<0.
 \end{equation}
 \end{prop}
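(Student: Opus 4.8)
The plan is to diagonalize $T$ and simply count its negative eigenvalues; almost every ingredient is already in hand. First I would write $T=-2\bigl[(\vl(\vg)-\lambda)+\tfrac{|E|}{4\tau}BB^t\bigr]$, so that $n_-(T)$ equals the number of \emph{positive} eigenvalues of the bracketed matrix. By the previous proposition $\vl(\vg)$ and $BB^t$ are simultaneously diagonalized by the eigenbasis $\{\ind_N,\ (|J|\ind_{|I|},-|I|\ind_{|J|})^t,\ (\vec{q},0)^t,\ (0,\vec{p})^t\}$ with $\sum q_i=\sum p_j=0$, which decomposes $\R^N$ into four mutually orthogonal subspaces of dimensions $1,\,1,\,|I|-1,\,|J|-1$ (summing to $N$) on each of which $T$ acts as a scalar. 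Reading these scalars off from the spectra of $\vl(\vg)$ and of $BB^t$, and using $\lambda=N\beta$, $\beta-\alpha=R|E|N/(2\tau|I||J|)$ and $r=|I||J|/|E|$ to simplify, one finds on the four subspaces respectively the values $2N\beta$, $-2N^2R/(r\tau)$, $-(RN/(r\tau))(2|J|-|I|)$ and $-(RN/(r\tau))(2|I|-|J|)$ — that is, exactly $-2$ times the four eigenvalues already listed just before the proposition. Since the subspaces are orthogonal and $T$ is scalar on each, $n_-(T)$ is merely the sum of the dimensions of those subspaces carrying a negative scalar, and no cancellation is possible.

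Next I would fix the signs. Under the standing hypothesis $\tau<0$, together with $R>0$ and $r\in(0,1)$, the factor $-RN/(r\tau)$ is strictly positive and $-2N^2R/(r\tau)>0$. Hence the eigenvalue on $(|J|\ind_{|I|},-|I|\ind_{|J|})^t$ never contributes, the eigenvalue of multiplicity $|I|-1$ is negative exactly when $2|J|-|I|<0$, and the eigenvalue of multiplicity $|J|-1$ is negative exactly when $2|I|-|J|<0$; substituting $|J|=N-|I|$ these become $|I|>2N/3$ and $|I|<N/3$. Finally $2N\beta$ has the sign of $\beta$. Thus $n_-(T)$ equals one when $\beta<0$ (zero when $\beta>0$), plus $|I|-1$ when $|I|>2N/3$, plus $|J|-1$ when $|I|<N/3$. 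Splitting the range of $|I|$ into $[1,N/3)$, $(N/3,2N/3)$ and $(2N/3,N-1]$, on which respectively exactly one, neither, and exactly one of those last two contributions is present, and then pairing with the two cases $\beta>0$ and $\beta<0$, reproduces the six asserted values; the endpoints $|I|=N/3$ and $|I|=2N/3$ are omitted because there a subspace eigenvalue vanishes, $T$ becomes singular, and $n_-$ is no longer the relevant invariant.

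I do not anticipate a real obstacle: the substance sits in the earlier spectral computations, and what is left is bookkeeping — keeping the roles of $I$ and $J$ matched to the correct $(\vec{q},0)$- and $(0,\vec{p})$-eigenspaces, checking that the four multiplicities $1+1+(|I|-1)+(|J|-1)$ exhaust $\R^N$ (so that the four scalars really are all the eigenvalues of $T$), and running the short case split. If anything needs a second look it is the degenerate cases $|I|=1$ and $|I|=N-1$, where one of the two $\vec{q}$- or $\vec{p}$-subspaces is empty; but there the corresponding term $|I|-1$ or $|J|-1$ is automatically zero and the formula still reads correctly.
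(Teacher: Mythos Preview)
Your proposal is correct and follows essentially the same route as the paper: simultaneously diagonalize $\vl(\vg)$ and $BB^t$ on the four eigenspaces listed just above the proposition, read off the four scalar eigenvalues of $(\vl(\vg)-\lambda)+\tfrac{|E|}{4\tau}BB^t$ (and hence of $T$), and then do the sign bookkeeping using $\tau<0$ together with $|I|\lessgtr N/3$, $|I|\lessgtr 2N/3$, and the sign of $\beta$. Your additional checks---that the multiplicities exhaust $\R^N$, the handling of $|I|=1,\,N-1$, and the remark on the degenerate boundaries $|I|=N/3,\,2N/3$---go slightly beyond what the paper spells out but are entirely consistent with it.
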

 
 As a consequence we have the following
 \begin{corr}
 Assume $\tau<0$. If $\abs{I}=1$ or $\abs{I}=N-1$ then $n_{-}(H)= \abs{E}+N-2$ for $\beta>0$ and  $n_{-}(H)= \abs{E}+N-1$ for $\beta<0$.
 \end{corr}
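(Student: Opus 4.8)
The plan is to read the claim off directly from the Schur-formula decomposition of $n_-(H)$ obtained above, together with the preceding Proposition, after specializing the partition sizes.

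First I would invoke the identity established just above, namely
\[
n_-(H) = n_-\!\paren{\frac{2\epsilon\tau}{\abs{E}}I_{\abs{E}\times\abs{E}}} + n_-(T), \qquad T = -2(\vl(\vg)-\lambda) - \frac{\abs{E}}{2\tau}BB^t,
\]
which is the Schur formula applied with $C = \frac{2\epsilon\tau}{\abs{E}}I_{\abs{E}\times\abs{E}}$; this $C$ is invertible precisely because $\tau\neq 0$, which is part of the standing hypothesis $\tau<0$. Since $\epsilon>0$ and $\tau<0$ the matrix $C$ is negative definite, so the first term equals $\abs{E}$. It therefore suffices to compute $n_-(T)$ in the two special cases $\abs{I}=1$ and $\abs{I}=N-1$.

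Next I would note that these two partition sizes lie in the outer intervals of the trichotomy in the preceding Proposition: for $N\geq 4$ we have $1\in[1,N/3)$ and $N-1\in(2N/3,N-1]$, the borderline $N=3$ (where $T$ degenerates, one of the quantities $2\abs{I}-\abs{J}$, $2\abs{J}-\abs{I}$ vanishing) being excluded. Substituting $\abs{I}=1$, $\abs{J}=N-1$ into that Proposition gives $n_-(T)=\abs{J}-1=N-2$ for $\beta>0$ and $n_-(T)=\abs{J}=N-1$ for $\beta<0$; substituting $\abs{I}=N-1$, $\abs{J}=1$ gives $n_-(T)=\abs{I}-1=N-2$ for $\beta>0$ and $n_-(T)=\abs{I}=N-1$ for $\beta<0$. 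In both cases $n_-(T)=N-2$ when $\beta>0$ and $n_-(T)=N-1$ when $\beta<0$.

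Adding the two contributions gives $n_-(H)=\abs{E}+N-2$ for $\beta>0$ and $n_-(H)=\abs{E}+N-1$ for $\beta<0$, which is the assertion. I do not expect a genuine obstacle here: the corollary is pure bookkeeping on top of the Proposition, and the only point requiring a moment's care is to confirm that the sign data used to orient $\sigma(T)$ in the proof of the Proposition — $R>0$, $N>0$, $r=\abs{I}\abs{J}/\abs{E}\in(0,1)$, and $\tau<0$ — are exactly the ones in force here, so that no independent sign analysis of the eigenvalues is needed.
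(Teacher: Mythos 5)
Your proposal is correct and is exactly the argument the paper intends: the corollary is stated as an immediate consequence of the Schur decomposition $n_-(H)=n_-\paren{\tfrac{2\epsilon\tau}{\abs{E}}I}+n_-(T)=\abs{E}+n_-(T)$ (valid since $\tau<0$) combined with the preceding Proposition evaluated at $\abs{I}=1$ or $\abs{I}=N-1$. Your extra remarks about excluding the degenerate case $N=3$ and confirming the sign data are sensible bookkeeping that the paper leaves implicit.
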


 \subsection{Index of the ``Reduced Hessian"}
 Our goal now is to compute the index: $n_{-}(\at{H}{(T_{\omega_0}(\Omega))^\perp})$ of the Hessian restricted to the orthogonal complement of $T_{\omega_0}(\Omega)$.  We begin with

\begin{lem}
Let $\omega_0$ be the bipartite critical point and put $S=T_{\omega_0}(\Omega)$. If $H$ is invertible then
\[n_{-}(\at{H^{-1}}{S^\perp}) _{+}(\nabla_{(\mu,\tau,\lambda)}(g_1,g_2,g_3)^t).\]
\end{lem}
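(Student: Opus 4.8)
The plan is to recognise this as the classical ``bordered Hessian''/envelope identity. In the subspace-inertia lemma above it is the last unevaluated term: applied with $A=H$ and $S=T_{\omega_{0}}\Omega$ one has $n_{-}\paren{\at{H}{T_{\omega_{0}}\Omega}}=n_{-}(H)-n_{-}\paren{\at{H^{-1}}{S^{\perp}}}$, the first term on the right having been computed in the previous subsection, so it remains to pin down $n_{-}\paren{\at{H^{-1}}{S^{\perp}}}$, and the claim is that it equals $n_{+}$ of the $3\times3$ Jacobian of the constraint values with respect to the Lagrange multipliers along the critical family. I would do this in three steps. Throughout, every inertia statement for $H$ is that of its symmetrisation — the rate $\epsilon$ contributes only the positive-definite factor $\mathrm{diag}(I_{N},\sqrt{\epsilon}\,I_{\abs{E}})$, which alters no signature — so I will simply take $H=-\nabla^{2}_{\omega}\mathscr{D}(\omega_{0})$, symmetric.

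\emph{Step 1: reduce the left-hand side to a $3\times3$ matrix.} Since $\Omega$ is the zero set of $(g_{1}-Q,g_{2}-P,g_{3}-R)$ and these functions are independent at $\omega_{0}$ — indeed $\vx\neq0$ because $R>0$, and $\vg$ is non-constant because $P>Q^{2}$ at any non-consensus bipartite state, which is the transversality noted earlier — the conormal space is the $3$-dimensional subspace $S^{\perp}=\mathrm{span}\{\nabla g_{1},\nabla g_{2},\nabla g_{3}\}$, gradients evaluated at $\omega_{0}$. Let $G$ be the $(N+\abs{E})\times3$ matrix whose columns are these gradients. For $v=Gc$, $c\in\R^{3}$, one has $\dpair{H^{-1}v}{v}=c^{t}\paren{G^{t}H^{-1}G}c$, so by Sylvester's law of inertia the quadratic form $H^{-1}$ restricted to $S^{\perp}$ and the matrix $G^{t}H^{-1}G$ share a signature; in particular $n_{-}\paren{\at{H^{-1}}{S^{\perp}}}=n_{-}\paren{G^{t}H^{-1}G}$.

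\emph{Step 2: differentiate the critical family in the multipliers.} I would free the multipliers, treating $m=(\mu,\tau,\lambda)$ as independent parameters, and set
\[
F(\omega,m)=\nabla_{\omega}\mathcal{D}(\omega)-\mu\,\nabla g_{1}(\omega)-\tau\,\nabla g_{2}(\omega)-\lambda\,\nabla g_{3}(\omega)=\nabla_{\omega}\mathscr{D}(\omega),
\]
so that $F(\omega_{0},m_{0})=0$ is precisely the bipartite critical-point system, $m_{0}$ being the multiplier triple belonging to that state. Then $\partial_{\omega}F(\omega_{0},m_{0})=\nabla^{2}_{\omega}\mathscr{D}(\omega_{0})=-H$ is invertible by hypothesis and $\partial_{m}F=-G$, so the implicit function theorem produces a smooth branch $m\mapsto\omega(m)$ of critical points through $\omega_{0}$ with $\at{\partial_{m}\omega}{m_{0}}=-(\partial_{\omega}F)^{-1}\partial_{m}F=-H^{-1}G$. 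Differentiating $m\mapsto(g_{1},g_{2},g_{3})(\omega(m))$ at $m_{0}$ gives
\[
\nabla_{(\mu,\tau,\lambda)}(g_{1},g_{2},g_{3})^{t}=G^{t}\,\at{\partial_{m}\omega}{m_{0}}=-\,G^{t}H^{-1}G.
\]

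\emph{Step 3 and the main obstacle.} Both $G^{t}H^{-1}G$ and its negative are symmetric $3\times3$ matrices, so $n_{+}\paren{\nabla_{(\mu,\tau,\lambda)}(g_{1},g_{2},g_{3})^{t}}=n_{+}\paren{-G^{t}H^{-1}G}=n_{-}\paren{G^{t}H^{-1}G}$, which by Step 1 is $n_{-}\paren{\at{H^{-1}}{S^{\perp}}}$; this is the asserted identity. The only step that is not bookkeeping is Step 2: on $\Omega$ the multipliers are slaved to $\omega$ via \eqref{eqn:tau}--\eqref{eqn:lambda}, so the very meaning of ``$\nabla$ in the multipliers'' has to be unpacked — one must observe that invertibility of $H$ reverses this dependence and realises the critical set near $\omega_{0}$ as a genuine $3$-parameter family parametrised by $(\mu,\tau,\lambda)$, whose constraint values move exactly by $-G^{t}H^{-1}G$. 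The remaining ingredients (transversality, Sylvester's law, the $\epsilon$-bookkeeping) are routine, and the single sign flip $n_{-}\leftrightarrow n_{+}$ is precisely the minus sign in $H=-\nabla^{2}_{\omega}\mathscr{D}$.
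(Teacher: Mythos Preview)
Your proposal is correct and follows essentially the same route as the paper's own proof: both differentiate the critical-point equation $\nabla_{\omega}\mathscr{D}=0$ in the multipliers to obtain $H^{-1}\nabla g_{i}=-\partial_{\mu_{i}}\omega$, then pair against $\nabla g_{j}$ to identify the restricted form $\at{H^{-1}}{S^{\perp}}$ with $-\nabla_{(\mu,\tau,\lambda)}(g_{1},g_{2},g_{3})^{t}$, from which the $n_{-}\leftrightarrow n_{+}$ swap is immediate. Your write-up is slightly more careful in two places the paper leaves implicit --- you invoke the implicit function theorem explicitly to justify the existence of the branch $\omega(m)$, and you flag the $\epsilon$-asymmetry of $H$ and dispose of it by congruence --- but the underlying argument is the same bordered-Hessian computation.
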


\begin{proof}
 This is really a fact from the method of Lagrange multipliers amd the thery of constrained optimization in general. First note that
 \[\nabla_{(\vx,\vg)}D-\mu\nabla_{(\vx,\vg)}g_1-\tau\nabla_{(\vx,\vg)}g_2-\lambda\nabla_{(\vx,\vg)}g_3 =0,\]
and in general we can determine $(\vx,\vg)=(\vx(\tau,\mu,\lambda),\vg(\tau,\mu,\lambda))$ as functions of the Lagrange multipliers. Differentiating
the above expression with respect to, say, $\mu$ by the usual chain rule gives
\[-H\cdot\PD{\vx}{\mu}=\nabla_{(\vx,\vg)}g_1,\]
and since $H$ is invertible we see that 
\[-\PD{\vx}{\mu}=H^{-1}\cdot\nabla_{(\vx,\vg)}g_1.\]
Similarly 
\[-\PD{\vx}{\tau}=H^{-1}\cdot\nabla_{(\vx,\vg)}g_2,\qquad\text{and}\qquad -\PD{\vx}{\lambda}=H^{-1}\cdot\nabla_{(\vx,\vg)}g_3.\]
Since $S^\perp$ is spanned by $\{\nabla_{(\vx,\vg)}g_i\}_{i=1}^3$, any $\vec{v}\in S^\perp$ can be written as $\vec{v} = \sum_i\alpha_idg_i$.
Consequently
\begin{align*}
 \at{H^{-1}}{S^\perp\times S^\perp}&=(H^{-1}\vec{v},\vec{v})\\
  &= \dpair{H^{-1}\paren{\sum_{i=1}^3\alpha_i\nabla_{(\vx,\vg)}g_i}}{\sum_{j=1}^3\alpha_j\nabla_{(\vx,\vg)}g_j}\\
&=-\sum_{i,j}\alpha_i\alpha_j\dpair{\PD{\vx}{\mu_i}}{\nabla_{(\vx,\vg)}g_j}\\
&=-\sum_{i,j}\alpha_i\alpha_j\PD{g_j}{\mu_i}\\
&=-\at{\nabla_{(\mu,\tau,\lambda)}(g_1,g_2,g_3)^t}{\R^3\times\R^3},
\end{align*}
whence the result.
\end{proof}

 Next recall that we have the following set of equations
 
 \begin{align*}
 \mu+2\tau M_1&=NR\\
 \mu M_1+2\tau M_2&=R\lambda\\
 \frac{\lambda}{N}+\frac{\mu}{2\tau}&=\frac{RN}{2r\tau}
 \end{align*}
 Solving for $R$, $P$ and $Q$ and using the fact that $g_1=Q$, $g_2=P$ and $g_3=R$, we get
 \begin{align}
 &g_1(\mu,\tau,\lambda)=\frac{r\lambda}{N}-\frac{\mu(1-r)}{2\tau}\\
& g_2(\mu,\tau,\lambda)=\frac{r\lambda ^2}{N^2}+\frac{\mu^2(1-r)}{4\tau^2}\\
 &g_3(\mu,\tau,\lambda)=\frac{2r\tau}{N}\paren{\frac{\lambda}{N}+\frac{\mu}{2\tau}}.
 \end{align}
 
 The Jacobian is
 \begin{equation}
 \begin{pmatrix}
\displaystyle \PD{g_3}{\lambda}& \displaystyle \PD{g_3}{\mu}& \displaystyle \PD{g_3}{\tau}\\
\\
 \displaystyle\PD{g_1}{\lambda}& \displaystyle \PD{g_1}{\mu}&\displaystyle \PD{g_1}{\tau}\\
 \\
  \displaystyle\PD{g_2}{\lambda}&\displaystyle \PD{g_2}{\mu}&\displaystyle \PD{g_2}{\tau}
 \end{pmatrix}
 =
 \begin{pmatrix}
\displaystyle \frac{2r\tau}{N^2} & \displaystyle \frac{r}{N} & \displaystyle\frac{2r\lambda}{N^2}\\
\\
\displaystyle \frac{r}{N}&-\displaystyle\frac{(1-r)}{2\tau}& \displaystyle \frac{\mu(1-r)}{2\tau^2}\\
\\
\displaystyle \frac{2r\lambda}{N^2}& \displaystyle \frac{\mu(1-r)}{2\tau^2}& \displaystyle\frac{\mu^2(1-r)}{2\tau^3}
 \end{pmatrix}
 \end{equation}
and  computing the determinant of the principal minors we get 
 \[\Delta_1= \frac{2r\tau}{N^2};\quad\Delta_2= -\frac{r}{N};\quad \text{and}\quad\Delta_3= \frac{2r^2(1-r)}{N^2\tau}\paren{\frac{\mu}{2\tau}+\frac{\lambda}{N}}^2.\]
 Thus for $\tau<0$, we see that $n_{-}(\at{H^{-1}}{S^\perp})=2$ and if in addition $\beta<0$, we get
\[n_{-}(\at{H(\omega_0)}{T_{\omega_0}\Omega})= n_{-}(H(\omega_0)) -n_{-}(\at{H(\omega_0)}{(T_{\omega_0}(\Omega))^\perp}) =N+\abs{E}-3\]
 which equals $\dim(T_{\omega_0}\Omega)$. As such, we have established

\begin{thm}
 Suppose that $\tau$, $\beta$ are both negative corresponding to the bipartite state where $\abs{I}=1$ or $\abs{I}=N-1.$\footnote{It might be more apt to call this
 particular state an ``ostracized state'' but perhaps this has a strong negative connotation.} Then this critical point is a local minimum of the constrained 
Dirichlet energy and as such is locally stable. If $\tau<0$ and $\beta>0$ then this state has a $1$-dimensional unstable manifold.
\end{thm}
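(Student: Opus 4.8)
The plan is to assemble the ingredients prepared in the two preceding subsections and then carry out a single dimension count. Write $S = T_{\omega_0}\Omega$, so $\dim S = N + \abs{E} - 3$. Since $\omega_0$ is a critical point of $\mathscr{D}$, the flow linearized at $\omega_0$ restricts to $S$ as the reduced Hessian $\at{H}{S}$; by the Lyapunov lemma, negative definiteness of $\at{H}{S}$ yields local asymptotic stability, while a single positive eigenvalue yields, via the stable manifold theorem, a one-dimensional unstable manifold. So I would reduce the theorem to the two equalities $n_-(\at{H}{S}) = \dim S$ (for $\beta<0$) and $n_-(\at{H}{S}) = \dim S - 1$ (for $\beta>0$), adding the remark that in the first case full negative inertia of $\at{H}{S}$ is equivalent to positive definiteness of the restricted Hessian of the Lagrangian $\mathscr{D}$, i.e. to the second-order sufficient condition for $\omega_0$ to be a strict constrained local minimum of $\mathcal{D}$.

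The first input is the Corollary: for $\abs{I}=1$ (equivalently $\abs{I}=N-1$) and $\tau<0$ one has $n_-(H) = \abs{E}+N-2$ if $\beta>0$ and $n_-(H) = \abs{E}+N-1$ if $\beta<0$. Before using it I would check that $H$ is invertible, so that both the Schur formula and the splitting lemma legitimately apply: this is immediate from the spectra already computed, since the eigenvalues of the $C$-block $\tfrac{2\epsilon\tau}{\abs{E}}I$ and of the Schur complement $-2(\vl(\vg)-\lambda)-\tfrac{\abs{E}}{2\tau}BB^t$ (namely $2N\beta$, $-\tfrac{2N^2R}{r\tau}$, and $-\tfrac{RN(2N-3)}{r\tau}$ in the cases at hand) never vanish once $\tau,\beta\neq 0$. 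The second input is $n_-(\at{H^{-1}}{S^\perp})$. By the Lagrange-multiplier lemma this equals $n_+$ of the $3\times 3$ Jacobian $\nabla_{(\mu,\tau,\lambda)}(g_1,g_2,g_3)^t$, whose leading principal minors were found to be $\Delta_1 = \tfrac{2r\tau}{N^2}$, $\Delta_2 = -\tfrac{r}{N}$, and $\Delta_3 = \tfrac{2r^2(1-r)}{N^2\tau}\paren{\tfrac{\mu}{2\tau}+\tfrac{\lambda}{N}}^2$. For $\tau<0$, using $0<r<1$ and the fact that $g_3 = R>0$ forces $\tfrac{\mu}{2\tau}+\tfrac{\lambda}{N}=\tfrac{NR}{2r\tau}\neq 0$, all three minors are strictly negative; hence by Sylvester's criterion the Jacobian is nonsingular with exactly one negative eigenvalue, so $n_-(\at{H^{-1}}{S^\perp}) = n_+ = 2$.

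Combining via the splitting lemma with $A=H$ gives $n_-(\at{H}{S}) = n_-(H) - 2$, and, since $H$ is invertible, $\at{H}{S}$ is nondegenerate, so $n_+(\at{H}{S}) = \dim S - n_-(\at{H}{S})$. For $\beta<0$ this is $(\abs{E}+N-1)-2 = \abs{E}+N-3 = \dim S$, so $\at{H}{S}$ has full negative inertia: $\omega_0$ is a strict constrained local minimum of $\mathcal{D}$ and is locally asymptotically stable. For $\beta>0$ it is $(\abs{E}+N-2)-2 = \dim S - 1$, leaving exactly one positive eigenvalue, whence a one-dimensional unstable manifold. This is precisely the assertion of the theorem.

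The portions requiring genuine care, rather than mere substitution, are the nondegeneracy verifications just mentioned (invertibility of $H$, of the Schur complement, and of the $3\times3$ Jacobian) and the fact that the displayed $H$ is not symmetric because of the $\epsilon$-weighting of the $\vg$-rows. The latter is harmless but should be flagged: $H = -\,\mathrm{diag}(I_N,\epsilon I_{\abs{E}})\,\mathrm{Hess}\,\mathscr{D}$ is a product of a positive-definite matrix with a symmetric one, hence has real spectrum and the same inertia as $-\mathrm{Hess}\,\mathscr{D}$; moreover $\epsilon$ cancels identically in the Schur complement and enters the $C$-block only through its sign, so every inertia count above may legitimately be read off the symmetric representative. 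With these points settled the argument is complete.
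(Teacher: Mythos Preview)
Your argument is correct and follows exactly the paper's own route: apply the Corollary for $n_-(H)$, compute $n_-(\at{H^{-1}}{S^\perp})=2$ via the Lagrange-multiplier Jacobian and its principal minors, and subtract using the splitting lemma. You are in fact somewhat more careful than the paper itself, explicitly treating the $\beta>0$ dimension count, verifying the nondegeneracy hypotheses, and flagging the $\epsilon$-induced asymmetry of $H$; none of these alters the strategy.
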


\section{Numerical Simulations}

All simulations are conducted on the complete graph $K_5$, so all actors are 
known to one another ($N=5,E=10$). The relevant ODEs were integrated with a 
fourth order Runge-Kutta algorithm with $dt=10^{-3}$.  
In each case the initial positions $x_i(0)$ and the initial opinions $\gamma_{ij}(0)$ were chosen randomly.  The edges are chosen as follows: 
we generate a vector ${\bf \gamma}$ with independent, identically distributed 
entries drawn from the uniform distribution on $(-1,1)$. We then removed 
the mean of  ${\bf \gamma}$, and  scaled ${\bf \gamma}$  to have unit length. 
The edge weights were taken to be $(P-Q^2)^{\frac12}  {\bf \gamma} + Q(1,1,1,\ldots,1)^t$, so that the resulting edge weights have mean $Q$ and variance $P-Q^2$. The positions were chosen uniformly from $(-1,1)$ with no mean.

The first set of graphs show the evolution of the positions (vertex weights) 
and opinions (edge weights) for an initial condition that converges to a 
consensus state. Note that initially two opinions $\gamma_{ij}$ are 
negative, and system converges to a stable consensus state with one negative 
opinion. This illustrates that the model can accomodate a certain amount of 
imbalance if a consensus is reached: actors can overcome a some antipathy if 
they have a common cause.

\begin{figure}[ht]
 \includegraphics[width=.3\textwidth]{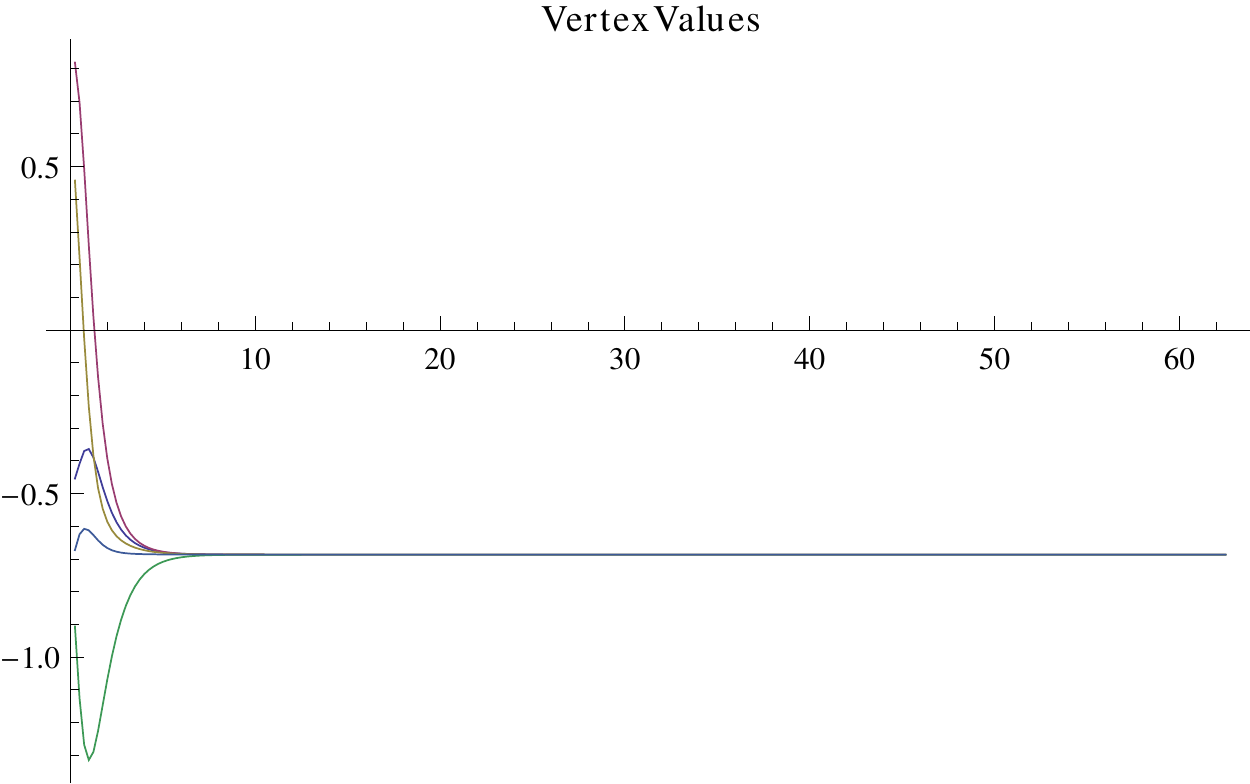}\\
\includegraphics[width=.3\textwidth]{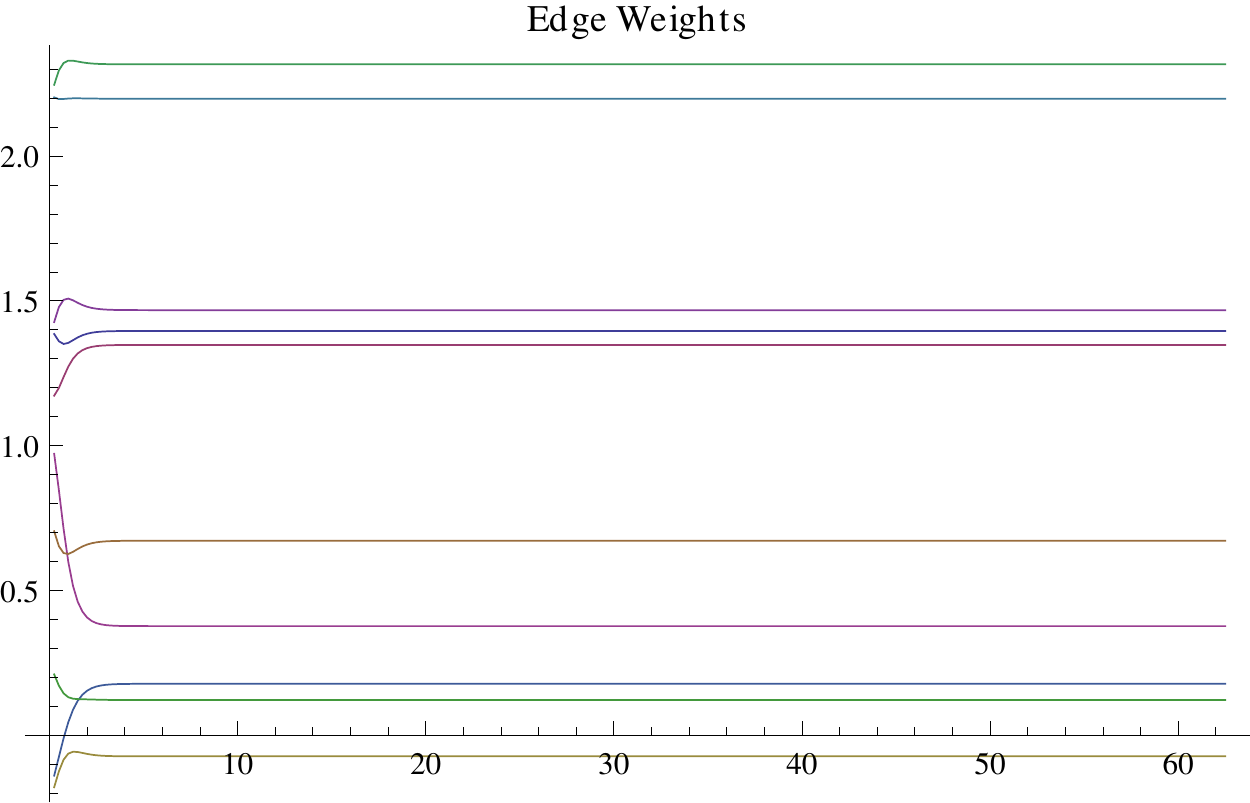}
\caption{The evolution of the positions (top) and opinions
(bottom) for initial conditions 
that converge to a consensus.}
\end{figure}

The second numerical experiment shows the dynamics in a case where the 
initial variance in the opinions, $P-Q^2$, is larger. In this case the 
dynamics does not converge to a consensus but rather to a balanced 
non-consensus state. In this case the actors divide into two parties, one 
with $4$ individuals and one with a single individual, 
where each actor has a positive opinion of the actors in the same camp and 
a negative opinion of the actors in the other camp.

\begin{figure}[ht]
  \includegraphics[width=.3\textwidth]{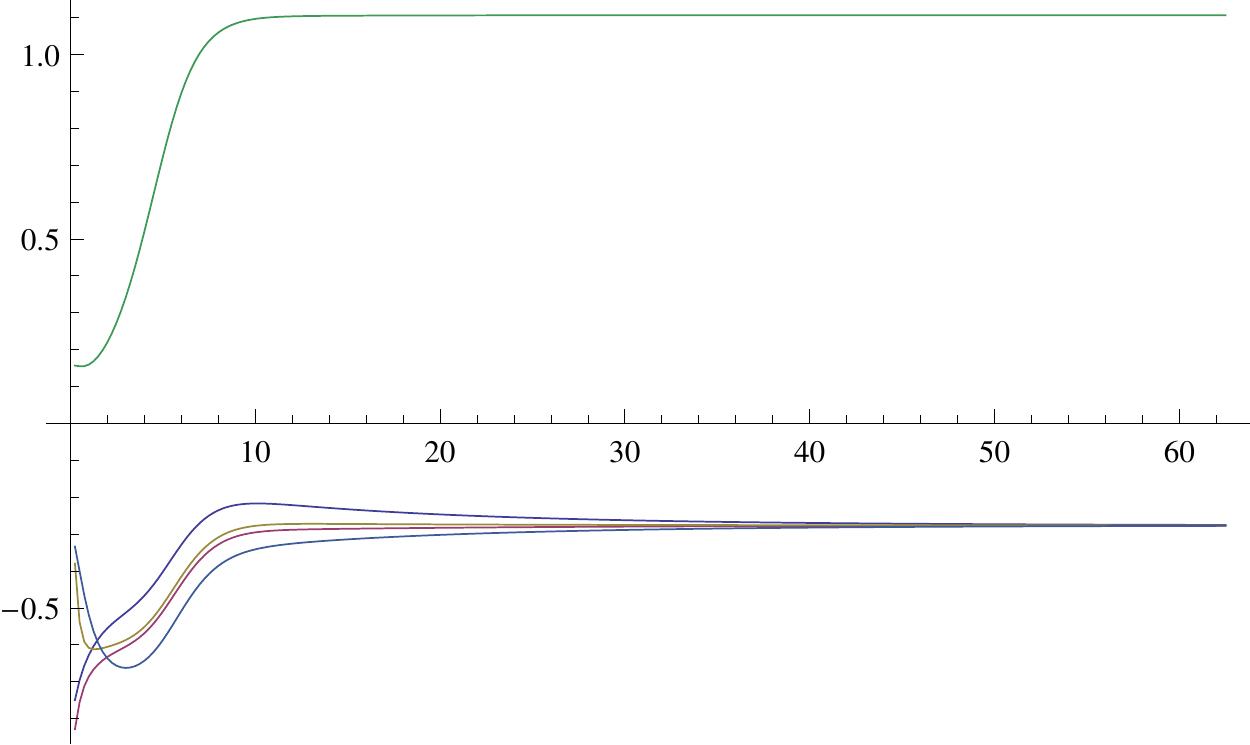}
\\ \includegraphics[width=.3\textwidth]{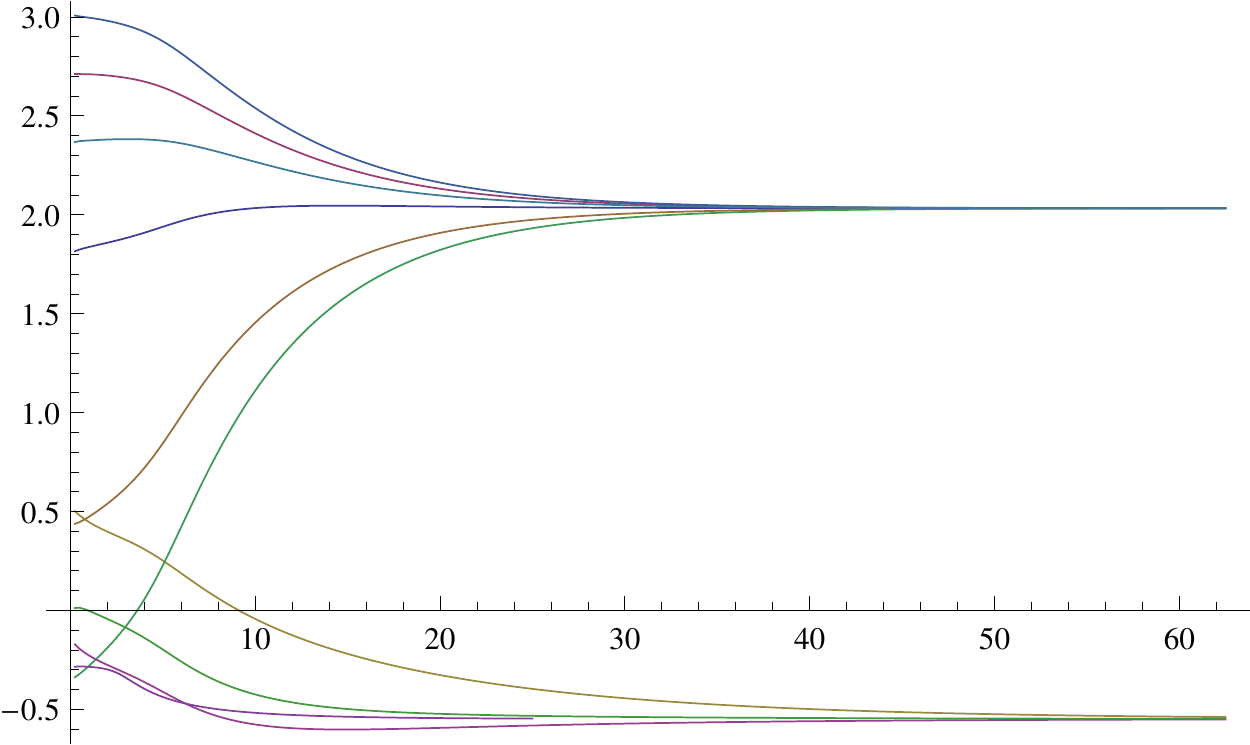}
 \caption{The evolution of the opinions (top) and positions 
(bottom) where the system evolves to a non-consensus balanced state.}
  \label{fig:evolve1}
\end{figure}

We have also considered a related system where there are contraints placed on
each individual actor, not just on the actors as a group. 

\section{ Conclusions}

We have introduced a model of dynamics on a network where the positions of 
individuals on some issue and the relationships between individuals co-evolve 
under a very natural dynamics. We see that the idea of balance arises 
naturally from a consideration of the steady states: all of the steady states
we have been able to find, with the exception of the consensus state, are 
either balanced states or anti-balanced states, with the latter always being 
unstable. In fact the only stable steady states that we have found analytically
or been able to observe numerically are the consensus state and the bi-partite 
state where one party has $1$ member and the other party has $N-1$ members.  

The latter fact seems somewhat surprising, and we believe that it is a 
consequence of the slightly unrealistic nature of the constraints on the 
opinions. In this letter we assume only global constraints on the opinions:
$E^{-1} \sum_{ij} \gamma_{ij}=Q, E^{-1} \sum_{ij} \gamma_{ij}^2=P,$ with no constraints
on individual actors. It would, perhaps, be more realistic to require that 
each actor maintain a certain mean level of civility: for each actor $i$ 
we could require that $(N-1)^{-1}\sum_{j}\gamma_{ij} = Q_i.$ Since the general 
effect of constraints is to increase the stability we expect that this 
type of constraint would lead to more stable steady states with parties of 
many different sizes. The analysis becomes more difficult in this case, however, and we leave this problem for future works.

It would also be interesting to consider more complicated graph topologies than 
the complete graph. Unlike the models of Antal, Krapivsky and 
Redner and  Kulakowski, Gawr\'onski and Gronek the model considered here 
extends naturally to an arbitrary graph which may contain few or no triangles. 
However it is not clear to what extent the balanced steady states for the 
complete graph persist in these more sparse graph topologies.


\bibliography{PRLBalance}

\begin{thebibliography}{10}

\bibitem{Antal.Krapivsky.Redner.2005}
T.~Antal, P.L. Krapivsky, and S.~Redner.
\newblock Dynamics of social balance on networks.
\newblock {\em Phys. Rev. E.}, 72, 2005.

\bibitem{Axelrod.1997}
R.~Axelrod.
\newblock The dissemination of culture: A model with local convergence and
  global polarization.
\newblock {\em J. Conflict Resolution}, 41(2), 1997.

\bibitem{Cartwright.Harary.1956}
D.~Cartwright and F.~Harary.
\newblock Structural balance: A generalization of heider's theory.
\newblock {\em Psychological Review}, 63:277--293, 1956.

\bibitem{Castellano.Fortunato.Loreto.2009}
C.~Castellano, S.~Fortunato, and V.~Loreto.
\newblock Statistical physics of social dynamics.
\newblock {\em Reviews of Modern Physics}, 81:591--645, 2009.

\bibitem{Castellano.2000}
C.~Castellano, M.~Marsili, and A.~Vespignani.
\newblock Nonequilibrium phase transition in a model for social influence.
\newblock {\em Phys. Rev. Lett.}, 85(16), 2000.

\bibitem{Durlauf.1999}
S.N. Durlauf.
\newblock How can statistical mechanics contribute to social science?
\newblock {\em Proc. Nat. Acad. Sci. USA}, 96:10582--10584, 1999.

\bibitem{Galam.1990}
S.~Galam.
\newblock Social paradoxes of majority rule voting and renormalization group.
\newblock {\em J. Stat. Phys.}, 61:943--951, 1990.

\bibitem{Galam.2008}
S.~Galam.
\newblock A review of galam models.
\newblock {\em Int. J. Mod. Phys. C}, 19:409--440, 2008.

\bibitem{Heider.1946}
F.~Heider.
\newblock Attitudes and cognitive organization.
\newblock {\em Journal of Psychology}, pages 107--112, 1946.

\bibitem{Kulakowski.Gawronski.Gronek.2005}
K.~Kulakowski, P.~Gawr\'onski, and P.~Gronek.
\newblock The hedier balance: A continuous approach.
\newblock {\em International Journal of Modern Physics C.}, 16(05), 2005.

\bibitem{Marvel.Kleinberg.Kleinberg.Strogatz.2010}
S.~Marvel, J.~Kleinberg, R.D. Kleinberg, and S.~Strogatz.
\newblock Continuous-time model of structural balance.
\newblock {\em Proceedings of the National Academy of Sciences},
  108:1771--1776, 2010.

\bibitem{Marvel.Strogatz.Kleinberg.2009}
S.~Marvel, S.~Strogatz, and J.~Kleinberg.
\newblock Energy landscape of social balance.
\newblock {\em Physical Review Letters}, 103, 2009.

\bibitem{Mobilia.2003}
M~Mobilia.
\newblock Does a single zealot affect an infinite group of voters.
\newblock {\em Phys. Rev. Lett.}, 91, 2003.

\bibitem{Shi.Mucha.Durrett.2013}
F.~Shi, P.J. Mucha, and R.~Durrett.
\newblock Multiopinion coevolving voter model with infinitely many phase
  transitions.
\newblock {\em Phys. Rev. E}, 88, 2013.

\bibitem{Lim.2011}
W.~Zhang, C.~Lim, S.~Screenivasan, J.~Xie, B.K. Szymanski, and G.~Korniss.
\newblock Social influencing and associated random walk models: Asymptotic
  consensus times on the complete graph.
\newblock {\em Chaos}, 21, 2011.

\end{thebibliography}

\section{Appendix}
In this section we show the existence of a locally attracting neighborhood of the consensus state.
The argument is similar in spirit to that for global stability. The main complication is that we have to
explicitly handle the evolution equation for the spectrum of $\vl$.

For convenience, we define $\hat{\ind}_N =\frac{1}{\sqrt{N}}\ind_N$ to be the normalized vector of all
 ones. Let  $W=\{\vy\in\R^N\,|\,\vy\cdot\hat{\ind}_N =0\}$ be the subspace of vectors orthogonal to the
 ``consensus state". Following a similar argument as before, we will write
\begin{equation*}
\vx = c\paren{\cos(\theta)\hat{\ind}_N +\sin(\theta)\vy}
\end{equation*}
with $\vy \in W$ and $\norm{\vy}=1$. The goal is to show that $\theta \to 0$ which will imply the stability
of the consensus state. We will first obtain equations governing the dynamics of the relevant variables.

From the constraint $\vx\cdot\dot{\vx} =0$, it follows that
\begin{equation*}
\lambda =\sin^2(\theta)\dpair{\vl\vy}{\vy}.
\end{equation*}
The original equation $\dot{\vx}=-2(\vl\vec{x}-\lambda\vx)$ and the fact that $\vl\hat{\ind}_N =0$ then implies
\begin{equation*}
\paren{-\sin(\theta)\dot{\theta}\hat{\ind}_N+\cos(\theta)\dot{\theta}\vy+\sin(\theta)\dot{\vy}}=-2\paren{\sin(\theta)\vl\vy-\lambda\paren{\cos(\theta)\hat{\ind}_N +\sin(\theta)\vy}}.
\end{equation*}
Taking the inner product of the above equation with $\hat{\ind}_N$  and simplifying gives
\begin{equation}\label{eqn:theta_evol}
\dot{\theta}=-\sin(2\theta)(L\vy,\vy).
\end{equation}
Using the equation for $\theta$, we can also determine that
\begin{equation}\label{eqn:y_evol}
\dot{\vy}=-2(\vl\vec{y}-\dpair{\vl\vy}{\vy}\vy).
\end{equation}
We define $e(t) = \dpair{\vl(t)\vy(t)}{\vy(t)}$ and we note that $e(t)$ takes values in the \emph{numerical range} of $\vl(t)$.
Put $\vm=\at{\vl}{W}$. Since $\vl$ is 
symmetric with real entries and since $\vy\in W$, we have the bound
\begin{equation}\label{eqn:spectral_bound}
\sigma_{min}(\vl(t))\leq \sigma_{min}(\vm(t))\leq e(t)\leq \sigma_{max}(\vm(t))\leq \sigma_{max}(\vl(t)).
\end{equation}
We shall occasionally make use of the following result which allows us to estimate the spectrum of a matrix in terms of its entries:

\begin{thm}[Gershgorin Disk Theorem]\label{thm:Gersh}
Let $A = (a_{ij})$ be a complex $n\times n$ matrix. Let $\displaystyle r_i =\sum_{j\neq i}\abs{a_{ij}}$ and let $D_i=\{z\in\C:\abs{z-a_{ii}}\leq r_i\}$. Then
\[\sigma(A)\subset \bigcup_{i=1}^{n}D_i.\]
\end{thm}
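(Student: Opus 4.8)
The plan is to use the standard ``largest coordinate'' argument. Let $\lambda \in \sigma(A)$ be arbitrary, and pick a nonzero eigenvector $\vec{v}=(v_1,\ldots,v_n)^t$ with $A\vec{v}=\lambda\vec{v}$. First I would choose an index $i$ realizing the maximum modulus among the components of $\vec{v}$, i.e. $\abs{v_i}=\max_{k}\abs{v_k}$; since $\vec{v}\neq 0$ we have $\abs{v_i}>0$. This choice of coordinate is the only real idea in the proof — everything after it is the triangle inequality.

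Next I would write out the $i$-th scalar equation of $A\vec{v}=\lambda\vec{v}$, namely $\sum_{j}a_{ij}v_j=\lambda v_i$, and isolate the diagonal term to get $(\lambda-a_{ii})v_i=\sum_{j\neq i}a_{ij}v_j$. Taking absolute values and applying the triangle inequality on the right-hand side yields
\[
\abs{\lambda-a_{ii}}\,\abs{v_i}\leq \sum_{j\neq i}\abs{a_{ij}}\,\abs{v_j}\leq \Big(\sum_{j\neq i}\abs{a_{ij}}\Big)\abs{v_i}=r_i\abs{v_i},
\]
where the second inequality uses $\abs{v_j}\leq\abs{v_i}$ for every $j$ by the maximality of $\abs{v_i}$. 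Dividing through by $\abs{v_i}>0$ gives $\abs{\lambda-a_{ii}}\leq r_i$, i.e. $\lambda\in D_i\subset\bigcup_{k=1}^{n}D_k$. Since $\lambda$ was an arbitrary element of $\sigma(A)$, this proves $\sigma(A)\subset\bigcup_{i=1}^{n}D_i$.

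I do not anticipate any genuine obstacle here; the result is elementary and the entire content is the observation that picking the coordinate of maximal modulus lets one bound each off-diagonal contribution $\abs{a_{ij}}\abs{v_j}$ by $\abs{a_{ij}}\abs{v_i}$. If one wanted the sharper statement about connected components of the union containing a prescribed number of eigenvalues (counted with multiplicity), one would additionally invoke a continuity/homotopy argument on $A(t)=D+t(A-D)$ with $D$ the diagonal part, but that refinement is not needed for the use made of the theorem in this paper.
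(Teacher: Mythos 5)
Your proof is correct and complete: the maximal-modulus-coordinate argument followed by the triangle inequality is the canonical proof of the Gershgorin disk theorem, and every step (in particular the bound $\abs{v_j}\leq\abs{v_i}$ justifying the second inequality, and the division by $\abs{v_i}>0$) is justified. The paper states this theorem without proof, as a classical fact to be applied in the Appendix, so there is no in-paper argument to compare against; your version supplies exactly the standard one.
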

As an application we have

\begin{lem}\label{lem:unif_spec_bnd}
The spectrum $\sigma(L(t))$ is uniformly bounded in $t$ and satisfies
\begin{equation}
 \abs{\sigma(L(t))}\leq 4\abs{E}P^{\frac{1}{2}}.
\end{equation}

\end{lem}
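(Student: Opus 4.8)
The plan is to apply the Gershgorin Disk Theorem (Theorem~\ref{thm:Gersh}) directly to $\vl(t)$ and then control the resulting radii using the conserved constraint $g_2=P$. Recall that under the symmetry assumption in force we have $\vl_{ij}=-\gamma_{ij}$ for $i\neq j$ and $\vl_{ii}=\sum_{k\neq i}\gamma_{ik}$. Hence the Gershgorin radius of the $i$-th row is $r_i=\sum_{j\neq i}\abs{\gamma_{ij}}$, and the disk center satisfies $\abs{\vl_{ii}}=\abs{\sum_{k\neq i}\gamma_{ik}}\le\sum_{k\neq i}\abs{\gamma_{ik}}=r_i$. Consequently every $\mu\in\sigma(\vl(t))$ lies in some disk $D_i$, so
\[
\abs{\mu}\le\abs{\vl_{ii}}+r_i\le 2r_i=2\sum_{j\neq i}\abs{\gamma_{ij}}\le 2\max_{1\le i\le N}\sum_{j\neq i}\abs{\gamma_{ij}}.
\]

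Next I would bound each row sum by the conserved quantity $P$. The crudest estimate already suffices: since $g_2=P$ gives $\sum_{k<l}\gamma_{kl}^2=\abs{E}P$, for any single pair we have $\gamma_{ij}^2\le\abs{E}P$, hence $\abs{\gamma_{ij}}\le(\abs{E}P)^{1/2}$ and $\sum_{j\neq i}\abs{\gamma_{ij}}\le(N-1)(\abs{E}P)^{1/2}$. Combining with the previous display gives $\abs{\mu}\le 2(N-1)(\abs{E}P)^{1/2}$. It then remains to invoke the elementary inequality $(N-1)^2\le 2N(N-1)=4\abs{E}$, i.e.\ $N-1\le 2\abs{E}^{1/2}$, which upgrades the bound to $\abs{\mu}\le 4\abs{E}\,P^{1/2}$, as claimed. (A sharper constant is available by applying Cauchy--Schwarz to the row sum, $\sum_{j\neq i}\abs{\gamma_{ij}}\le\sqrt{N-1}\,\bigl(\sum_{j\neq i}\gamma_{ij}^2\bigr)^{1/2}\le\bigl((N-1)\abs{E}P\bigr)^{1/2}$, using $\sum_{j\neq i}\gamma_{ij}^2\le\abs{E}P$, but this refinement is not needed.)

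The estimate is uniform in $t$ at no extra cost: $P$ is precisely one of the quantities the Lagrange multipliers are chosen to preserve along the flow, so $\sum_{k<l}\gamma_{kl}(t)^2=\abs{E}P$ for all $t$, and nothing else about $\vl(t)$ entered the argument. I do not expect a genuine obstacle here; the only points needing (minor) care are remembering to use the symmetry reduction so that the off-diagonal Laplacian entries are exactly $-\gamma_{ij}$ (rather than $-\tfrac12(\gamma_{ij}+\gamma_{ji})$), and the small arithmetic step needed to land precisely on the stated constant $4\abs{E}P^{1/2}$.
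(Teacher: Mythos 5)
Your proposal is correct and follows essentially the same route as the paper: Gershgorin's theorem bounds each eigenvalue by twice a row sum of absolute values of the $\gamma_{ij}$, and the conserved constraint $g_2=P$ (i.e.\ $\sum_{i<j}\gamma_{ij}^2=\abs{E}P$) then yields the stated constant, with uniformity in $t$ coming for free since $P$ is preserved by the flow. The only (immaterial) difference is in the final bookkeeping: the paper bounds the maximal row sum by $2\sum_{i<j}\abs{\gamma_{ij}}$ and applies Cauchy--Schwarz over all $\abs{E}$ edges, whereas you bound each entry by $(\abs{E}P)^{1/2}$ and use $N-1\le 2\abs{E}^{1/2}$; both land on $4\abs{E}P^{1/2}$.
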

\begin{proof}
That the spectrum is bounded is obvious since the entries of $\vl$ lie in a compact set and the determinant is a polynomial
and thus continuous in the entries. What we gain here is an explicit upper bound as follows.
Let $z\in\sigma(L)$ be a point in the spectrum. The Gershgorin theorem, Theorem  \ref{thm:Gersh}, implies that
\begin{align*}
\abs{z}&\leq \max_{i}\abs{\vl_{ii}}+\max_k\sum_{j\neq k}\abs{\vl_{kj}}\\
&\leq2\max_i\sum_{j\neq i}\abs{\gamma_{ij}}\\
&\leq 2\sum_{i}\sum_{j\neq i}\abs{\gamma_{ij}}\\
&\leq4\sum_{i<j}\abs{\gamma_{ij}}
\end{align*}
The Cauchy-Schwartz inequality and the constraint then implies that
\[\abs{z}\leq4\abs{E}^{\frac{1}{2}}\paren{\sum_{i<j}\abs{\gamma_{ij}}^2}^{\frac{1}{2}}\leq4\abs{E}P^{\frac{1}{2}},\]
which proves the lemma.
\end{proof}

\begin{remark}
 It is possible to modify the above argument and sharpen the above bound to 
\[
  \abs{\sigma(L(t))}\leq \sqrt{2N}(N-1)P^{\frac{1}{2}}.
\]
As we do not use this estimate, we will not prove it.
\end{remark}

Now let $\vec{u}\in W$ be a normalized eigenvector of $\sigma_{min}(\vm)$ i.e. $\vm\vec{u}=\sigma_{min}\vec{u}$ and for convenience
 let us put $\til{\sigma} =\sigma_{min}(\vm)$. A straightforward computation shows  that
\begin{equation}
\DO{t}\til{\sigma}=(\dot{\vm}\vec{u},\vec{u}),
\end{equation}
where $\dot{\vm}$ is the matrix obtained by differentiating the entries of $\vm$ with respect to $t$. The Gershgorin theorem, 
as in the proof of Lemma \ref{lem:unif_spec_bnd}, implies
\begin{align*}
\abs{\DO{t}\til{\sigma}}&\leq|(\dot{\vm}\vec{u},\vec{u})|\leq|\sigma(\dot{\vl})|\leq4\sum_{i<j}\abs{\dot{\gamma}_{ij}}.
\end{align*}
	
We note the following simple lemma which will allow us to simply the various expressions

\begin{lem}\label{lem:reduc}
 Let $\vx = \sqrt{R}\paren{\cos(\theta)\hat{\ind}_N +\sin(\theta)\vy}$ with $\vy\in W$ and $\norm{\vy}=1$. Then
\[
 \sum_{i< j}(y_i-y_j)^2=N,
\]
and thus
\[
\sum_{i< j}(x_i-x_j)^2=R\sin^2(\theta).
\]
 \end{lem}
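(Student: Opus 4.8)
The plan is to reduce both identities to a single known fact about sums of squared pairwise differences, namely the algebraic identity
\[
\sum_{i<j}(a_i-a_j)^2 = N\sum_{i=1}^N a_i^2 - \paren{\sum_{i=1}^N a_i}^2,
\]
valid for any vector $\vec a \in \R^N$. First I would verify this identity by expanding the left-hand side: each $(a_i-a_j)^2 = a_i^2 - 2a_ia_j + a_j^2$, and summing over unordered pairs $i<j$ contributes each $a_i^2$ exactly $N-1$ times and each cross term $-2a_ia_j$ exactly once, giving $(N-1)\sum a_i^2 - \paren{\sum_i a_i^2 - \paren{\sum_i a_i}^2}$... more cleanly, $\sum_{i<j}(a_i-a_j)^2 = \frac12\sum_{i,j}(a_i-a_j)^2 = \frac12\paren{2N\sum_i a_i^2 - 2\paren{\sum_i a_i}^2}$, which is the claimed formula.

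Next I would apply this with $\vec a = \vy$. Since $\vy\in W$ means $\vy\cdot\hat{\ind}_N = 0$, i.e. $\sum_i y_i = 0$, the second term vanishes, and since $\norm{\vy}=1$ we get $\sum_{i<j}(y_i-y_j)^2 = N\cdot 1 - 0 = N$, which is the first identity. For the second identity, note that $x_i - x_j = \sqrt R\paren{\cos(\theta)(\hat{\ind}_{N,i} - \hat{\ind}_{N,j}) + \sin(\theta)(y_i - y_j)} = \sqrt R\sin(\theta)(y_i-y_j)$, because the constant vector $\hat{\ind}_N$ has all equal components, so the $\cos\theta$ term drops out. Squaring and summing over $i<j$ then gives $\sum_{i<j}(x_i-x_j)^2 = R\sin^2(\theta)\sum_{i<j}(y_i-y_j)^2 = R\sin^2(\theta)\cdot N$.

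I should double-check the normalization against the constraint: here $\vx = \sqrt R(\cos\theta\,\hat{\ind}_N + \sin\theta\,\vy)$ with $\hat{\ind}_N$ and $\vy$ orthonormal, so $\norm{\vx}^2 = R(\cos^2\theta + \sin^2\theta) = R$, consistent with $g_3 = R$; this confirms the stated form of $\vx$ is the right one and no stray factor is lost. There is no real obstacle here — the only thing to be careful about is the bookkeeping of how many times each term appears in the pair-sum (the factor-of-$\tfrac12$ conversion between ordered and unordered sums), and the observation that the consensus direction contributes nothing to differences, which is exactly why the $\sin^2\theta$ factor is clean.
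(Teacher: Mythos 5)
Your argument is correct, and the paper offers nothing to compare it against: the lemma is introduced there as a ``simple lemma'' with no proof given. The two ingredients you use --- the identity $\sum_{i<j}(a_i-a_j)^2 = N\sum_i a_i^2 - \paren{\sum_i a_i}^2$ specialized to $\vy$ with $\sum_i y_i=0$ and $\norm{\vy}=1$, and the observation that the constant $\cos(\theta)\hat{\ind}_N$ component drops out of every difference $x_i-x_j$ --- are exactly the right ones. (Your first pass at the pair-sum bookkeeping is slightly garbled, but the $\tfrac12\sum_{i,j}$ version you settle on is clean and correct.)

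One thing you should not pass over silently: your computation ends with $\sum_{i<j}(x_i-x_j)^2 = NR\sin^2(\theta)$, which is \emph{not} the value $R\sin^2(\theta)$ claimed in the lemma. The two displayed conclusions are mutually inconsistent as printed --- given the first, the second must carry a factor of $N$ --- and your value is the correct one. You can confirm this against the way the lemma is used immediately afterwards: substituting into \eqref{eqn:tau} and \eqref{eqn:mu} reproduces the paper's displayed formula for $\dot\gamma_{ij}$, with its $PN$ and $QN$ terms, only if $\sum_{i<j}(x_i-x_j)^2 = NR\sin^2(\theta)$. So the second display of the lemma contains a typographical error (a missing factor of $N$); your proof is right, but a complete write-up should state the corrected conclusion explicitly rather than ending on a value that does not match the claim being proved.
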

 A direct substitution shows that
 \[\dot{\gamma}_{ij}=-\epsilon R\sin^2(\theta)\brac{(y_i-y_j)^2-\frac{(PN-Qe)+\gamma_{ij}(e-QN)}{\abs{E}(P-Q^2)}},\]
 and thus
  \[\abs{\dot{\gamma}_{ij}}\leq\epsilon R\sin^2(\theta)\brac{(y_i-y_j)^2+\frac{\abs{PN-Qe}+\abs{\gamma_{ij}}\abs{e-QN}}{\abs{E}(P-Q^2)}},\]
  which in turn implies, using the constraints, Lemma \ref{lem:reduc} and the Cauchy Schwartz inequality that
\begin{align*}
\sum_{i<j}\abs{\dot{\gamma}_{ij}}&\leq \epsilon R\sin^2(\theta)\brac{\sum_{i<j}(y_i-y_j)^2+\frac{\abs{PN-Qe}}{\abs{E}(P-Q^2)}\sum_{i<j}1+\frac{\abs{e-QN}}{\abs{E}(P-Q^2)}\sum_{i<j}\abs{\gamma_{ij}}}\\
&\leq\epsilon R\sin^2(\theta)\brac{N+\frac{\abs{PN-Qe}+P^{\frac{1}{2}}\abs{e-QN}}{(P-Q^2)}}.
\end{align*}
By Lemma \ref{lem:unif_spec_bnd}, $\abs{e}\leq 4\abs{E}P^{\frac{1}{2}}$ and thus the quantity in brackets in the last
 inequality above is seen to be uniformly bounded. Thus for some positive constant $K=K(R,N,P,Q)$ we have that
\begin{equation}
\abs{\DO{t}\til{\sigma}}\leq\epsilon K\sin^2(\theta)
\end{equation}
We are now ready to prove the local stability theorem.

\begin{proof}
We will construct a ``trapping region" for the flow by using the differential inequalities we have derived thus far. Equations  \eqref{eqn:theta_evol} and \eqref{eqn:spectral_bound} together imply that
\begin{equation}\label{eqn:trap_flow}
\left.\begin{aligned}
\DO{t}{\theta}&\leq-\til{\sigma}\sin(2\theta)\\
\abs{\DO{t}\til{\sigma}}&\leq\epsilon K\sin^2(\theta).
\end{aligned}\right\}
\end{equation}
Consider the set of curves through $(\theta_0,\sigma_0):=(\theta(0),\til{\sigma}(L(0)))$
\begin{align*}
\Gamma_1 = \{(\theta,\sigma)\,|\,\sigma^2+\epsilon K\ln\abs{\sec(\theta)} =\sigma_0^2+\epsilon K\ln\abs{\sec(\theta_0)}\}\\
\Gamma_2 = \{(\theta,\sigma)\,|\,\sigma^2-\epsilon K\ln\abs{\sec(\theta)} =\sigma_0^2-\epsilon K\ln\abs{\sec(\theta_0)}\}
\end{align*}
Direct computation shows that $\Gamma_1$ and $\Gamma_2$ intersects the positive  $\sigma$-axis at $\sqrt{\sigma_0^2+\epsilon K\ln\abs{\sec(\theta_0)}}$ and $\sqrt{\sigma_0^2-\epsilon K\ln\abs{\sec(\theta_0)}}$ respectively. If we set $\delta_{crit}=\arccos\paren{\exp\paren{-\sigma_0^2/\epsilon K}}$ it holds that $\epsilon K\ln\abs{\sec(\theta_0)}\leq\sigma_0^2$ for $|\theta_0|\leq\delta_{crit}$ so that the latter intersection is guaranteed to be real. Thus we may choose any $\delta\in[0,\delta_{crit}]$ and in particular for the choice
\begin{equation}
\delta=\arccos\paren{\exp\paren{-\tfrac{\sigma_0^2}{2\epsilon K}}},
\end{equation}
we see that uniformly for $\theta_0\in N_{\delta}(0)$, we have the containment\[ \brac{\sqrt{\sigma_0^2-\epsilon K\ln\abs{\sec(\theta_0)}},\sqrt{\sigma_0^2+\epsilon K\ln\abs{\sec(\theta_0)}}}\subset \brac{\frac{\sigma_0}{2},\frac{\sqrt{3}\sigma_0}{2}}.\]
Thus we see that by \eqref{eqn:trap_flow} the vector field is pointing \emph{downwards or tangential} on $\Gamma_1$ and \emph{upwards or tangential} on $\Gamma_2$. In the interior of the region bounded above by $\Gamma_1$, below by $\Gamma_2$, and to the right and left by $\theta=\pm\delta$ respectively, it is also pointing \emph{strictly leftwards} for $\theta>0$ and  \emph{strictly rightwards} for $\theta<0$. It then follows that $\displaystyle\lim_{t\to\infty}\theta(t) =0$ which proves the theorem.
\end{proof}
\end{document}